\newcommand{\C}{{\mathbb{C}}}
\newcommand{\E}{{\mathbb{E}}}
\newcommand{\abs}[1]{\left\lvert#1\right\rvert}
\def\E{{\mathbb E}}
\def\V{{\mathrm{Var}}}
\def \Im{\mathop{\rm Im}}
\def \Re{\mathop{\rm Re}}
\def\C{{\mathbb C}}
\def\le{\leqslant}
\def\ge{\geqslant}
\newcommand{\Or}{\mathcal{O}}
\numberwithin{equation}{section}
\DeclareMathOperator{\polylog}{polylog}
\renewcommand{\d}{\mathrm{d}}
\renewcommand{\Re}{\mathop{\mathrm{Re}}}
\renewcommand{\Im}{\mathop{\mathrm{Im}}}
\newcommand{\mc}[1]{\mathcal{#1}}
\newcommand{\wt}[1]{\widetilde{#1}}
\newcommand{\wh}[1]{\widehat{#1}}
\newcommand{\norm}[1]{\left\lVert#1\right\rVert}
\newcommand{\ud}{\,\mathrm{d}}
\newtheorem{problem}{Problem}
\newtheorem{theorem}{Theorem}[section]
\newtheorem{lemma}[theorem]{Lemma}
\theoremstyle{definition}
\numberwithin{equation}{section}
\numberwithin{equation}{section}
\newcommand{\eq}[1]{(\ref{eq:#1})}
\renewcommand{\sec}[1]{\hyperref[sec:#1]{Section~\ref*{sec:#1}}}
\newcommand{\app}[1]{\hyperref[app:#1]{Appendix~\ref*{app:#1}}}
\newcommand{\thm}[1]{\hyperref[thm:#1]{Theorem~\ref*{thm:#1}}}
\newcommand{\lem}[1]{\hyperref[lem:#1]{Lemma~\ref*{lem:#1}}}
\newcommand{\cor}[1]{\hyperref[cor:#1]{Corollary~\ref*{cor:#1}}}
\newcommand{\prb}[1]{\hyperref[prb:#1]{Problem~\ref*{prb:#1}}}
\newcommand{\fgr}[1]{\hyperref[fgr:#1]{Figure~\ref*{fgr:#1}}}
\newcommand{\tab}[1]{\hyperref[tab:#1]{Table~\ref*{tab:#1}}}
\newcommand{\beq}{\begin{equation}}
\newcommand{\eeq}{\end{equation}}
\newcommand{\beqa}{\begin{eqnarray}}
\newcommand{\eeqa}{\end{eqnarray}}
\newcommand{\bra}[1]{\ensuremath{{\langle{#1}|}}}
\newcommand{\ket}[1]{\ensuremath{{|{#1}\rangle}}}
\newcommand{\braket}[1]{\ensuremath{{\langle{#1}\rangle}}}
\let\@@magyar@captionfix\relax
\title{Dense outputs from quantum simulations}
\author{Jin-Peng Liu$^{1,2,3,\dag}$, \quad Lin Lin$^{1,4,5,*}$\\
\footnotesize $^{1}$ Department of Mathematics, University of California, Berkeley\\
\footnotesize $^{2}$ Simons Institute for the Theory of Computing, University of California, Berkeley\\
\footnotesize $^{3}$ Center for Theoretical Physics, Massachusetts Institute of Technology, Cambridge\\
\footnotesize $^{4}$ Applied Mathematics and Computational Research Division, Lawrence Berkeley National Laboratory, Berkeley\\
\footnotesize $^{5}$ Challenge Institute for Quantum Computation, University of California, Berkeley, Berkeley\\
\footnotesize Email: 
$^{\dag}$ jliu1219@terpmail.umd.edu \footnotesize $^{*}$ linlin@math.berkeley.edu
}
\date{}
\begin{document}

\maketitle

\begin{abstract}
The quantum dense output problem is the process of evaluating time-accumulated observables from time-dependent quantum dynamics using quantum computers. This problem arises frequently in applications such as quantum control and spectroscopic computation. We present a range of algorithms designed to operate on both early and fully fault-tolerant quantum platforms. These methodologies draw upon techniques like amplitude estimation, Hamiltonian simulation, quantum linear Ordinary Differential Equation (ODE) solvers, and quantum Carleman linearization. We provide a comprehensive complexity analysis with respect to the evolution time $T$ and error tolerance $\epsilon$.  Our results demonstrate that the linearization approach can nearly achieve optimal complexity $\mathcal{O}(T/\epsilon)$ for a certain type of low-rank dense outputs. Moreover, we provide a linearization of the dense output problem that yields an exact and finite-dimensional closure which encompasses the original states. This formulation is related to the Koopman Invariant Subspace theory and may be of independent interest in nonlinear control and scientific machine learning. 
\end{abstract}

\section{Introduction}\label{sec:introduction}

Simulating quantum physics is one of the primary applications of quantum computers~\cite{Fey82}. The first explicit quantum algorithm for quantum simulation was proposed by Lloyd~\cite{Llo96} using product formulas, and numerous quantum simulation algorithms have been developed~\cite{Wie96,Zal98,KWB17,BAC07,PQS11,BCC13,BCC15,BCK15,LC16,LC17,LW18,CMN17,COS19,BCS20,CST21,SBW21,SHC21,AFL20,AFL22,ZZS21,CCH22,CLLL22}, with various applications ranging from quantum chemistry~\cite{KJL08,Lan10,MEA18,CRO18,BBMC20,SBW21} to quantum field theory~\cite{JLP11,pre19} and condensed matter physics~\cite{BWC17}.
To analyze the cost of these quantum simulation algorithms, it is often assumed that we are interested in the final quantum state at some time $T$. After obtaining such a state stored in a quantum register, we can then output its information to a classical computer by measuring certain observables. 

However, many applications require not only the information at the final simulation time $T$, but the information of the quantum state on a continuous time interval, or its discretized form with dense samples on the time interval (the detailed definition for the evaluation of a time-accumulated observable refers to \prb{dense} below).
Unless the observable of interest commutes with the Hamiltonian, according to the principles of quantum mechanics, the state collapses after each measurement. As a result, when we acquire the state at time $t_n$ and conduct a measurement, it becomes necessary to restart the simulation from time $0$ in order to perform additional measurements at $t_n$ or to obtain the state at $t_{n+1}$.
By following this straightforward algorithm, we treat the simulation up to each discretized time step $t_n$ as an individual quantum simulation problem.
It is natural to ask whether there are more efficient algorithms than the strategy above for evaluating time-accumulated observables. 
Motivated by the literature of classical simulation of differential equations with dense outputs (see e.g.,~\cite{HairerOstermann1990,LandryCaboussatHairer2009})
we refer to this setting as \textit{dense outputs from quantum simulations}.\footnote{
Classical Hamiltonian simulation algorithms typically operate with the wave-function (and consequently, observables) defined on a discrete time grid. Consequently, the concept of \textit{dense outputs} takes on a slightly different meaning, wherein observables must be resolved on an even finer time grid compared to the given time grid.}
To our knowledge, such a setting has not been analyzed before in the quantum algorithms literature.

We formally define the quantum dense output problem as below.

\begin{problem}[Quantum dense output]\label{prb:dense}
A time-accumulated observable associated with a time-dependent quantum dynamics is given by
\begin{equation}
\begin{aligned}
    \frac{\d }{\d t}\ket{\psi(t)} &= -iH(t) \ket{\psi(t)}, \quad \ket{\psi(0)}=\ket{\psi_{\mathrm{in}}},\\
    J &= \int_0^T \bra{\psi(t)}O(t)\ket{\psi(t)}  ~\d t.
\label{eq:dense}
\end{aligned}
\end{equation}
Here $H(t)$ and $O(t)$ represent continuous Hermitian matrices with respect to $t$. We have access to unitaries that block encode $H(t)$ and $O(t)$ for all $t$, with $\|O(t)\|\le 1$. Additionally, we are provided with a state preparation oracle that prepares the initial state $\ket{\psi_{\mathrm{in}}}$. Our objective is to estimate the value of $J$ with a desired precision of $\epsilon$, within a given time duration $T > 0$.
\end{problem}

\prb{dense} arises in diverse areas such as quantum control and spectroscopic computation. For instance, Li and Wang studied efficient quantum algorithms for the quantum control problem of the Mayer type~\cite{LW23}. As a more general case, \eq{dense} can be viewed as a quantum control problem of the Bolza type~\cite{Dal21,BCR10,RWP09,WG07}. The dense output problem can also arise when $O(t) = \ket{\phi(t)}\bra{\phi(t)}$, and $\int_0^T \bra{\psi(t)}O(t)\ket{\psi(t)} ~\mathrm{d}t$ gives the time accumulated fidelity between the driven state $\ket{\psi(t)}$ and the desired state or trajectory $\ket{\phi(t)}$.
In spectroscopic computation, the spectra estimation from the molecular dynamics can also be formulated as \eq{dense}. The observable $\int_0^T e^{i\omega t}\bra{\psi(t)}O(t)\ket{\psi(t)} ~\d t$ can be used to compute the linear absorption and fluorescence spectra of molecular aggregates~\cite{ren2018time}. 
For further discussions on the applications of \prb{dense}, we refer readers to \sec{application}. 

We expect the optimal complexity of solving \prb{dense} is $\Or(T/\epsilon)$ for general Hamiltonian systems, since there is a lower bound $\Omega(T)$ by no-fast-forwarding theorem~\cite{BAC07} as well as a lower bound $\Omega(1/\epsilon)$ by the Heisenberg limit~\cite{AA17}. To achieve this goal, we develop several quantum algorithms for \prb{dense} as summarized in \tab{summary}.

\begin{table}[H]
\renewcommand{\arraystretch}{1.5}
    \centering
    \begin{adjustbox}{width=\textwidth}
    \begin{tabular}{c|c|c|c|c|c}
      \hline\hline
      \textbf{Theorem} & \textbf{Algorithm} & \textbf{Measurement} & \textbf{Queries to $H(t)$} & \textbf{Queries to $\ket{\psi_{\mathrm{in}}}$} & \textbf{Notes} \\
      \hline
      \thm{hs-ht} & Hamiltonian simulation & Hadamard test & $\Or(T^3/\epsilon^2)$ & $\Or(T^2/\epsilon^2)$ & Early fault-tolerant \\
      \hline
      \multirow{2}*{\thm{hs-ae}} & \multirow{2}*{Hamiltonian simulation} & Biased amplitude estimation & $\Or(T^3/\epsilon)$ & $\Or(T^2/\epsilon)$ & Fault-tolerant \\
      \cline{3-6}
      &  & Unbiased amplitude estimation & $\Or(T^{2.5}/\epsilon)$ & $\Or(T^{1.5}/\epsilon)$ & Fault-tolerant \\
      \hline
      \thm{lode-ae} & Quantum linear ODE solver & Amplitude estimation & $\Or(T^2/\epsilon)$ & $\Or(T^2/\epsilon)$ & Non-unitary relaxation \\
      \hline
      \thm{node-ae} & Quantum Carleman linearization & Padding, amplitude estimation & $\Or(\Gamma T/\epsilon)$ & $\Or(\Gamma T/\epsilon)$ & Low-rank, linearization \\
      \hline\hline
    \end{tabular}
    \end{adjustbox}
    \caption{Summary of quantum algorithms for the dense output problem. Here $T$ is the evolution time, $\epsilon$ is the error tolerance, and $\Gamma$ is a parameter that depends on the output $J$ as defined in \eq{Gamma}.  }
    \label{tab:summary}
\end{table}

(i) We first consider an \emph{early fault-tolerant quantum algorithm}. We perform separate Hamiltonian simulations and employ Hadamard test, with complexity $\Or(T^3/\epsilon^2)$. 

(ii) We then propose \emph{fault-tolerant quantum algorithms} with improved $\epsilon$ dependence. We perform separate Hamiltonian simulations with biased and unbiased amplitude estimation, with improved complexity $\Or(T^3/\epsilon)$ and $\Or(T^{2.5}/\epsilon)$, respectively. 

(iii) In spite of quantum simulation algorithms, we alternatively consider the \emph{non-unitary relaxation}: we apply the quantum linear Ordinary Differential Equation (ODE) solver~\cite{Ber14,BCOW17,CL19} to produce the Feynman-Kitaev history state, and then perform the global amplitude estimation for the dense output. The global measurement is able to remove the bias accumulation in measurement, and hence result in the overall complexity $\Or(T^2/\epsilon)$.

(iv) Finally, we consider the \emph{non-unitary embedding/linearization} of the whole system: we develop a quantum linearization algorithm for the hybrid dynamics based on the Koopman Invariant Subspace (KIS) theory~\cite{BBPK16}. For \prb{dense} with a low-rank observable $\Or(t)$ (known as the few-body observable~\cite{HK19,HKP20,HCP22}; see \prb{qdcd} for detailed discussions), we employ an \emph{exact finite-dimensional linear representation (closure)} of the nonlinear hybrid quantum-classical dynamics. For the resulting linearized dynamics, we apply the quantum linear ODE solver and perform the amplitude estimation with padding to achieve the overall complexity $\Or(T/\epsilon)$. This result is nearly tight for both $T$ and $\epsilon$, matching the no-fast-forwarding lower bound $\Omega(T)$~\cite{BAC07} and the Heisenberg limit lower bound $\Omega(1/\epsilon)$~\cite{AA17}.

From the viewpoint of the Koopman von Neumann operator theory, the linearization of the dense output problem offers a concrete example in quantum mechanics such that: (i) it has an \emph{exact} finite-dimensional closure of the nonlinear dynamics without truncation; (ii) the closure system explicitly includes the original state variables. In the Koopman Invariant Subspace (KIS) theory, very few examples are known with both finite-dimensional exact and explicit closure, while previous examples are only found in classical mechanics, such as the polynomial attracting slow manifold~\cite{BBPK16}. We believe the quantum linearization methods for the dense output problem can be of independent interest in Koopman Operator Optimal Control (KOOC), Dynamic Mode Decomposition (DMD), and data-driven discovery and identification in scientific machine learning~\cite{LM98,Mez05,KBBP16,BK22}.

It is worth noting that the core findings of this paper on dense outputs are largely unaffected by the nature of the Hamiltonian simulation, whether it is time-dependent or not. For the sake of notational simplicity, we can focus on time-independent simulations. In cases where explicit time-dependence is present, we can assume access to the time-dependent Hamiltonian matrix  $H(t)$ through an oracle defined as
\begin{equation}
(\bra{0^{m_H}} \otimes I) \;\text{HAM-T}\; (\ket{0^{m_H}} \otimes I) = \frac{1}{\alpha_H}\sum_{l} \ket{l}\bra{l} \otimes H(t_l)
\end{equation}
over a sufficiently dense time grid $\{t_l\} \subset [0, T]$, using $m_H$ ancillary qubits. In this case, we should interpret $\norm{H} := \sup_{t \in [0, T]} \norm{H(t)}$.  For detailed discussions on such an oracle, we direct readers to~\cite{LW18, FLT22}. We also assume the ratio between the block encoding factor $\alpha_H$ and the operator norm $\norm{H}$ satisfies $\alpha_H/\norm{H} = \mathcal{O}(1)$. The analysis of differential equation solvers often involves many polylogarithmic factors. To simplify the presentation, we may slightly abuse the big-$\Or$ notation to suppress some of these polylogarithmic factors.

The rest of the paper is organized as follows. \sec{early} introduces an early fault-tolerant quantum simulation algorithm with Hadamard test. \sec{simulation} describes fault-tolerant quantum simulation algorithms with biased or unbiased amplitude estimation. \sec{ode} proposes a quantum linear ODE solver for producing the history-state solution with global amplitude estimation. \sec{linearization} develops a quantum linearization algorithm for the low-rank dense outputs and perform amplitude estimation with padding. \sec{application} discusses several applications of our algorithms, including quantum control and spectroscopic computation. Finally, we conclude and discuss open questions in \sec{discussion}.

\section{Early fault-tolerant quantum simulation algorithm}\label{sec:early}

We start by outlining an early fault-tolerant algorithm. To be specific, we expect such early fault-tolerant quantum algorithms feature a limited number of logical qubits, controlled operations, and ancilla qubits, as well as a short circuit depth. Therefore, we exploit the Hadamard test circuit. The quantum circuit is simple and uses only one ancilla qubit as required.

\begin{theorem}[Hamiltonian simulation with Hadamard test]\label{thm:hs-ht}
We consider an instance of the quantum dense out problem in \prb{dense}.
There exist a quantum algorithm producing an observable approximating the cost functional $J(u)$ with error $\epsilon\in(0,1)$, succeeding with probability $1-\delta$, with 
\begin{equation}
\Or\Bigl( \frac{\|H\|T^3\log(1/\delta)}{\epsilon^2} \Bigr).
\end{equation}
queries to the matrix oracle for $H(t)$, and 
\begin{equation}
\Or\Bigl(\frac{T^2 \log(1/\delta)}{\epsilon^2}\Bigr)
\end{equation}
queries to the state preparation oracle for $\ket{\psi_{\mathrm{in}}}$.
\end{theorem}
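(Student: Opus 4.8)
The plan is to discretize the time integral, reduce each term to a real part of an overlap between two time-evolved states, estimate that overlap with a Hadamard test, and then track how the discretization error, the per-term statistical error, and the number of terms combine into the stated query counts. First I would replace $J = \int_0^T \bra{\psi(t)}O(t)\ket{\psi(t)}\,\d t$ by a Riemann (or higher-order) quadrature $\wt J = \sum_{n=1}^{N} w_n \bra{\psi(t_n)}O(t_n)\ket{\psi(t_n)}$ on a grid of size $N$ with spacing $\Delta t = T/N$. Since $H(t)$ and $O(t)$ are continuous (and, via the block-encoding/HAM-T assumptions, have bounded norm), the integrand $f(t) = \bra{\psi(t)}O(t)\ket{\psi(t)}$ is Lipschitz with a constant that scales like $\|H\|$ (from $\tfrac{\d}{\d t}\ket{\psi}$) plus the modulus of continuity of $O$; a first-order rule then gives $|J - \wt J| \lesssim \|H\| T^2 / N$, so taking $N = \Theta(\|H\| T^2/\epsilon)$ controls the quadrature error by $\epsilon/2$. (If one wants to be less wasteful one uses a higher-order quadrature, but since the final bound is dominated by other factors the crude estimate suffices; I would keep $N = \widetilde\Or(T^2/\epsilon)$ absorbing $\|H\|=\Or(1)$ scalings into the constant where the statement does.)

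Next, for each fixed $t_n$ I express $\bra{\psi(t_n)}O(t_n)\ket{\psi(t_n)}$ as a quantity a Hadamard test can estimate. Writing $\ket{\psi(t_n)} = U(t_n)\ket{\psi_{\mathrm{in}}}$ with $U(t_n)$ the time-evolution unitary, and using a block encoding of $O(t_n)$, the quantity $\bra{\psi_{\mathrm{in}}}U(t_n)^\dagger O(t_n) U(t_n)\ket{\psi_{\mathrm{in}}}$ is the expectation of a unitary-dilated operator on an enlarged space; a standard Hadamard test with controlled-$U(t_n)$, controlled-$U(t_n)^\dagger$ and the block encoding of $O(t_n)$ in between yields a $\pm1$ random variable whose mean is (a known multiple of) $\Re\bra{\psi(t_n)}O(t_n)\ket{\psi(t_n)}$, and the $S$-gate variant gives the imaginary part — though since $O(t_n)$ is Hermitian and the bra/ket coincide, the quantity is real and one Hadamard test per sample suffices. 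Here $U(t_n)$ is realized by an optimal Hamiltonian-simulation subroutine (e.g.\ \cite{LC17}), costing $\widetilde\Or(\|H\| t_n + \log(1/\eta))$ queries to the $H$-oracle to simulate to precision $\eta$; I would set the simulation precision per step to $\eta = \Theta(\epsilon/N)$ so the total simulation bias is $\le \epsilon/4$, contributing only logarithmic overhead.

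Then comes the statistical part. To estimate $\wt J$ to additive error $\epsilon/2$ with confidence $1-\delta$, I allocate error $\epsilon/(2N)$-ish per term — more precisely, since $\wt J = \sum w_n f(t_n)$ with $\sum |w_n| \le T$, estimating each $f(t_n)$ to error $O(\epsilon/T)$ suffices; each Hadamard-test sample is a bounded ($[-1,1]$) random variable, so by Hoeffding/Chernoff I need $M_n = \Or\big((T/\epsilon)^2 \log(N/\delta)\big)$ samples for term $n$ (union bound over the $N$ terms absorbs into the log). Each sample at $t_n$ costs $\widetilde\Or(\|H\| t_n) = \widetilde\Or(\|H\| T)$ queries to $H$ and $\Or(1)$ queries to the state-preparation oracle. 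Multiplying: total $H$-queries $= \sum_n M_n \cdot \widetilde\Or(\|H\|T) = N \cdot \Or((T/\epsilon)^2 \log(1/\delta)) \cdot \widetilde\Or(\|H\|T) = \widetilde\Or(\|H\| T^2 \cdot T^2/\epsilon^2 \cdot \log(1/\delta)/ ...)$ — wait, one must be careful: $N = \widetilde\Or(T^2/\epsilon)$ would give $T^4$; the fix is that one does \emph{not} need per-term error $\epsilon/T$ together with $N \sim T^2/\epsilon$ points, because the quadrature and statistical budgets are chosen jointly: take $N = \widetilde\Or(T^2/\epsilon)$ grid points but estimate the \emph{average} over a random subsample, or equivalently note that the dominant cost is a single global budget of $\Or(T^2\log(1/\delta)/\epsilon^2)$ total Hadamard-test shots, each of cost $\widetilde\Or(\|H\|T)$ in $H$-queries and $\Or(1)$ in state-prep, giving $\widetilde\Or(\|H\|T^3\log(1/\delta)/\epsilon^2)$ and $\Or(T^2\log(1/\delta)/\epsilon^2)$ respectively. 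I would organize the write-up around that global accounting: $\Or(1/\epsilon^2)$ from Hadamard-test variance, one factor $T$ from the integral range ($\sum|w_n|$), one factor $T$ from per-term precision needing $1/(\text{something}\cdot T)$... and the remaining $T$ in the $H$-query count from the per-shot simulation length.

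The main obstacle I expect is precisely this bookkeeping — disentangling how the $T$-dependence splits between "number of shots" and "cost per shot," and making sure the quadrature-error budget and the Monte-Carlo budget are combined so that neither $N$ nor $M$ individually blows the exponent past what's claimed. The physics/algorithms content (Hadamard test linearity, optimal Hamiltonian simulation, Hoeffding) is all standard; the delicate point is the joint optimization of grid size versus per-point accuracy, and confirming that the honest answer is $T^3/\epsilon^2$ in $H$-queries and $T^2/\epsilon^2$ in state-prep queries rather than something worse. A secondary, minor point is justifying the first-order quadrature bound with an explicit Lipschitz constant for $t \mapsto \bra{\psi(t)}O(t)\ket{\psi(t)}$, which follows from $\|\tfrac{\d}{\d t}\ket{\psi(t)}\| \le \|H\|$ and the assumed continuity of $O(t)$ but should be stated cleanly.
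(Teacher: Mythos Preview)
Your overall strategy---discretize the integral, estimate each term with a Hadamard test, then aggregate---matches the paper, and the final complexity you land on is correct. But the route to the bookkeeping differs, and the paper's version cleanly resolves exactly the difficulty you flag as ``the main obstacle.''

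The paper does \emph{not} use a first-order quadrature with $N\sim T^2/\epsilon$ nodes. It uses a composite Clenshaw--Curtis rule with only $N_t=\Or(T\log(1/\epsilon))$ nodes, and the crucial estimate it extracts is $\sum_{k}\omega_k^2=\Or(T)$ for the quadrature weights. With this in hand there is no per-term error allocation and no random subsampling: one simply takes the \emph{same} number $N_s$ of Hadamard-test samples at every node and applies Hoeffding once to the full weighted sum $\sum_k \omega_k \wt f_k$. Since each sample contributes a bounded variable scaled by $\omega_k$, the Hoeffding exponent involves $N_s\sum_k\omega_k^2=\Or(N_s T)$, so $N_s=\Or(T\log(1/\delta)/\epsilon^2)$ suffices. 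Then $H$-queries are $\sum_k \|H\| t_k \cdot N_s=\Or(\|H\|T\cdot N_t)\cdot N_s=\Or(\|H\|T^3\log(1/\delta)/\epsilon^2)$ and state-prep queries are $N_t\cdot N_s=\Or(T^2\log(1/\delta)/\epsilon^2)$---no joint optimization of grid size versus per-point accuracy is ever needed.

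Your random-subsampling fix (``estimate the average over a random subsample'') is a legitimate alternative and does reach the same endpoint: draw a uniform node, scale the $\pm 1$ Hadamard outcome by $T$, and apply Hoeffding with range $[-T,T]$, giving $\Or(T^2\log(1/\delta)/\epsilon^2)$ shots at $\Or(\|H\|T)$ $H$-queries each. But this is not what the paper does, and in your write-up it is only gestured at (``or equivalently\ldots'') rather than argued. The paper's approach buys a deterministic grid and a one-line concentration step at the cost of invoking a high-order quadrature and its weight-norm bound; your approach would buy freedom from fancy quadrature at the cost of making the Monte-Carlo-over-nodes argument explicit. Either is fine, but you should commit to one and carry it through cleanly rather than leaving the key accounting step unresolved.
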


\begin{proof}
In \prb{dense}, we need to estimate the integral 
\begin{equation}
A=\int_0^T \braket{O}_t\ud t \coloneqq \int_0^T \braket{\psi(t)|O(t)|\psi(t)} \ud t
\end{equation}
to precision $\epsilon$ with success probability at least $1-\delta$. 

We divide the time interval $[0,T]$ with a composite Clenshaw--Curtis quadrature rule, with nodes $\{t_{1},\ldots,t_{N_t}\}$ and weights $\{\omega_{1},\ldots,\omega_{N_t}\}$. We can approximate $A$ as
\begin{equation}
\wt{A}=\sum_{k=1}^{N_t} \omega_{k} \braket{O}_{t_{k}}
\label{eq:quadrature}
\end{equation}
with $\abs{A-\wt{A}}\le \epsilon/2$, where $N_t = \Or(T\log(1/\epsilon))$. More details refer to \app{quadrature}. 

For the $k$-th iteration, we first propagate a few copies of $\ket{\psi(0)}$ to $\ket{\psi(t_k)}$ by standard Hamiltonian simulation techniques, with known complexity $\Or(\|H\|t_k) =\Or(\|H\|k\Delta t)$~\cite{BAC07,BCK15,CST21}; we then take measurements from copies of $\ket{\psi(t_k)}$ to produce the observable
\begin{equation}
f_k = \braket{O}_{t_{k}} = \bra{\psi(t_k)} O(t_k) \ket{\psi(t_k)}.
\end{equation}
The time complexity of producing such observables equals to the \emph{product} of the simulation cost and the measurement cost. This is due to the fact that the quantum state collapses after measurement, and hence one must restart propagating $\ket{\psi(0)}$ to $\ket{\psi(t_k)}$ for each $k$-th iteration.

Now for each given $t_k$, we can evaluate $f_k=\braket{O}_{t_k}$ using the Hadamard test.

\paragraph{Hadamard Test:}

Assuming each $O(t_k)$ can be accessed via a block encoding matrix, we can estimate $\Re{\bra{\psi(t_k)} O(t_k) \ket{\psi(t_k)}}$ using the standard Hadamard test circuit. We introduce a random variable $X_k$ and set it to be $1$ when the measurement outcome of the ancilla qubit is $0$, and set it to be $-1$ when the measurement outcome is $1$.
Similarly, we can estimate $\Im{\bra{\psi(t_k)} O(t_k) \ket{\psi(t_k)}}$, and introduce a random variable $Y_k$ that depends in the same way on the measurement outcome.

We then compute
\begin{equation}
f_k=\mathbb{E} X_k + \mathrm{i} \mathbb{E} Y_k,
\end{equation}
In practice, an \emph{unbiased estimator} to $f_k$ is
\begin{equation}
\wt{f}_k=\frac{1}{N_s}\sum_{l=1}^{N_s} (X_k^{(l)}+ \mathrm{i} Y_k^{(l)}),
\end{equation}
where $N_s$ is the number of samples, and $X_k^{(l)},Y_k^{(l)}$ are independent samples. Since $|\omega_k X_k^{(l)}|,|\omega_k Y_k^{(l)}|\le \omega_k$, by applying Hoeffding's inequality to the real and imaginary part of the observables respectively, we have  
\begin{equation}
\begin{split}
\mathbb{P}\left(\abs{\wt{A}-\sum_{k=1}^{N_t}\omega_k \wt{f}_k}\ge \frac{\epsilon}{2}\right)&
\le \mathbb{P}\left(\abs{\Re\wt{A}-\sum_{k=1}^{N_t}\omega_k \Re\wt{f}_k}\ge \frac{\epsilon}{2\sqrt{2}}\right)+\mathbb{P}\left(\abs{\Im\wt{A}-\sum_{k=1}^{N_t}\omega_k \Im\wt{f}_k}\ge \frac{\epsilon}{2\sqrt{2}}\right)\\
&\le 4 \exp\left(-\frac{N_s^2 \epsilon^2}{16 N_s\sum_{k=1}^{N_t} \omega_k^2}\right) = 4 \exp\left(-\frac{N_s \epsilon^2}{16 \sum_{k=1}^{N_t} \omega_k^2}\right)\\
\end{split}
\label{eq:hoeffding_had}
\end{equation}
So we need to estimate the 2-norm of the weight $\sum_{k=1}^{N_t} \omega_k^2$. 

According to \app{quadrature}, we have
\begin{equation}
\sum_{k=1}^{N_t} \omega_k^2=\Or(T/\log(1/\epsilon))=\Or(T), \quad N_t =  \Or(T\log(1/\epsilon)).
\end{equation}
Plug this into \eq{hoeffding_had}, we can choose
\begin{equation}
N_s=\Or\left(\frac{T \log(1/\delta)}{\epsilon^2}\right)
\end{equation}
so that
\begin{equation}
\mathbb{P}\left(\abs{\wt{A}-\sum_{k=1}^{N_t}\omega_k \wt{f}_k}\ge \frac{\epsilon}{2}\right)<\delta.
\end{equation}
Taking the quadrature error $\frac{\epsilon}{2}$ into account (more details refer to \app{quadrature}), we can estimate $A$ within precision $\epsilon$ with probability at least $1-\delta$. 

Since the cost for propagating $\ket{\psi(t_k)}$ is proportional to $\|H\|t_k =  \|H\|k\Delta t$, the algorithm for evaluating $A$ takes 
\begin{equation}
\Or\Bigl(\sum_{k=1}^{N_t} \|H\|k\Delta t \cdot N_s \Bigr) = \Or\Bigl( \frac{\|H\|T^3\log(1/\delta)}{\epsilon^2} \Bigr).
\end{equation}
queries to the matrix oracle for $H(t)$. 

For the state preparation, we need to prepare a number of quantum states $\Bigl\{ \ket{\psi(t_k)} \Bigr\}_{k=1}^{N_t}$ and each $\ket{\psi(t_k)}$ requires $N_s$ copies. 
Overall, the algorithm takes 
\begin{equation}
N_t\cdot N_s = \Or\Bigl(\frac{T^2 \log(1/\delta)}{\epsilon^2}\Bigr)
\end{equation}
queries to the state preparation oracle for $\ket{\psi_{\mathrm{in}}}$.
\end{proof}

\section{Fault-tolerant quantum simulation algorithm}\label{sec:simulation}

In the fully fault-tolerant quantum computation scenario, we are able to employ amplitude amplification and estimation with improved accuracy~\cite{BHM02}. We state the standard (biased) amplitude estimation as follows. 

\begin{lemma}[Theorem 12 of~\cite{BHM02}]\label{lem:nae}
Given a state $\ket{\psi}$ and reflection operators $R_{\psi} = 2 \ket{\psi}\bra{\psi} - I$ and $R = 2P-1$, and any $0<\eta<1$, there exists a quantum algorithm that outputs $\wt a$, an approximation to $a = \bra{\psi} P \ket{\psi}$, so that
\begin{equation}
| \wt a - a | \le 2\pi\frac{\sqrt{a(1-a)}}{r} + \frac{\pi^2}{r^2}.
\end{equation}
with probability at least $1-\eta$ and $\Or(r\log(1/\eta))$ uses of $R_{\psi}$ and $R$.  Moreover, if $a \le 1/(4r^2)$, then $\wt a = 0$ with probability at least $1-\eta$. 
\end{lemma}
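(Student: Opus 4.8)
The plan is to reconstruct the standard amplitude-estimation argument of~\cite{BHM02}. First I would pass to the two-dimensional subspace on which the operators act nontrivially. Write $\ket{\psi} = \sin\theta_a\,\ket{\psi_1} + \cos\theta_a\,\ket{\psi_0}$ with $\ket{\psi_1} = P\ket{\psi}/\norm{P\ket{\psi}}$, $\ket{\psi_0} = (I-P)\ket{\psi}/\norm{(I-P)\ket{\psi}}$, and $\theta_a\in[0,\pi/2]$ chosen so that $a = \bra{\psi}P\ket{\psi} = \sin^2\theta_a$. Both $R = 2P-I$ and $R_\psi$ preserve $V := \mathrm{span}\{\ket{\psi_0},\ket{\psi_1}\}$ (indeed $R\ket{\psi_1}=\ket{\psi_1}$, $R\ket{\psi_0}=-\ket{\psi_0}$, and $\ket{\psi}\in V$), so the Grover-type iterate $\mathbf{Q} := -R_\psi R$ restricts to $V$, where it acts as a rotation by $2\theta_a$ with eigenvalues $e^{\pm 2i\theta_a}$ and eigenvectors $\ket{\psi_\pm}\in V$; moreover $\ket{\psi}$ is an equal-magnitude superposition of $\ket{\psi_+}$ and $\ket{\psi_-}$.

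Second, I would run textbook quantum phase estimation for $\mathbf{Q}$ with input state $\ket{\psi}$ on the target register and a control register of $m = \lceil\log_2 r\rceil$ qubits; this invokes $\mathbf{Q}$ a total of $2^m-1 = \Or(r)$ times, hence uses $R_\psi$ and $R$ each $\Or(r)$ times. Since $\ket{\psi}$ is balanced between the two eigenvectors of $\mathbf{Q}$, measuring the control register returns an integer $\tilde y$ such that $\tilde\theta := \pi\tilde y/2^m$ approximates one of $\theta_a$ or $\pi-\theta_a$ to within $\pi/r$ with probability at least $8/\pi^2$; setting $\wt a := \sin^2\tilde\theta$ is then unambiguous because $\sin^2$ is invariant under $\theta\mapsto\pi-\theta$. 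To propagate the angular error to $\wt a$ I would use the identity $\sin^2\tilde\theta - \sin^2\theta_a = \sin(\tilde\theta-\theta_a)\sin(\tilde\theta+\theta_a)$, together with $|\sin x|\le|x|$, $|\sin(\tilde\theta+\theta_a)|\le|\sin 2\theta_a| + |\tilde\theta-\theta_a|$, and $|\sin 2\theta_a| = 2\sqrt{a(1-a)}$, giving
\begin{equation*}
|\wt a - a| \;\le\; |\tilde\theta-\theta_a|\bigl(2\sqrt{a(1-a)} + |\tilde\theta-\theta_a|\bigr) \;\le\; \frac{2\pi\sqrt{a(1-a)}}{r} + \frac{\pi^2}{r^2},
\end{equation*}
which is the claimed bound.

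Finally, to boost the success probability from $8/\pi^2$ to $1-\eta$, I would run the whole procedure $\Or(\log(1/\eta))$ times independently and output the median; this preserves the accuracy bound and multiplies the query count by $\Or(\log(1/\eta))$, yielding $\Or(r\log(1/\eta))$ uses of $R_\psi$ and $R$. For the threshold clause, if $a\le 1/(4r^2)$ then $\theta_a \le \arcsin(1/(2r)) \le \pi/(4r) < \pi/2^{m+1}$, so both $\theta_a/\pi$ and $1-\theta_a/\pi$ round to the grid point $\tilde y = 0$, which phase estimation therefore returns with probability at least $8/\pi^2$, hence with probability $1-\eta$ after median amplification, giving $\wt a = \sin^2 0 = 0$. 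I expect the only real obstacle to be bookkeeping: making the branch ambiguity $\theta_a\leftrightarrow\pi-\theta_a$, the grid-rounding of phase estimation, and the median amplification all line up so that the stated inequality holds with exactly these constants and this query count; nothing here is deep, and the full argument is in~\cite{BHM02}.
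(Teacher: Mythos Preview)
The paper does not prove this lemma at all: it is stated as a citation of Theorem~12 of~\cite{BHM02} and used as a black box in the proofs of \thm{hs-ae} and \thm{lode-ae}. Your sketch is a faithful reconstruction of the standard Brassard--H{\o}yer--Mosca--Tapp argument (two-dimensional invariant subspace, Grover iterate as a rotation by $2\theta_a$, phase estimation with $\Or(r)$ calls, the $\sin^2$ error propagation, and median boosting), so there is nothing to compare against here.
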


We also consider recent advances in the field of unbiased amplitude estimation~\cite{RF22,vCGN22,CH22}. Here we employ the unbiased amplitude estimator proposed in~\cite{CH22}.  

\begin{lemma}[Theorem 2.4 of~\cite{CH22}]\label{lem:nuae}
Given a state $\ket{\psi}$ and a projection operator $\Pi$ with $p = \|\Pi \ket{\psi}\|^2$, $t\ge1$ and $\epsilon \in (0,1)$, there exists a quantum algorithm that outputs $\wt p \in[-2\pi,2\pi]$, so that 
\begin{equation}
| \E[\wt p] - p | \le \eta, \qquad \text{and} \qquad \V(\wt p) \le \frac{91p}{r^2} + \eta.
\end{equation}
The algorithm needs $O(r\log\log(r)\log(r/\eta))$ uses of the reflection operators $I - 2 \ket{\psi}\bra{\psi}$ and $I-2\Pi$.
\end{lemma}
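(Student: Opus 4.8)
Since this is Theorem~2.4 of~\cite{CH22}, it is imported rather than reproved; the route one follows to establish it is as below. First I would pass to the standard amplitude-estimation geometry: write $\ket{\psi} = \sin\theta\,\ket{\psi_{1}} + \cos\theta\,\ket{\psi_{0}}$ with $\Pi\ket{\psi_{1}} = \ket{\psi_{1}}$ and $\Pi\ket{\psi_{0}} = 0$, so $p = \|\Pi\ket{\psi}\|^{2} = \sin^{2}\theta$ for some $\theta\in[0,\pi/2]$. The Grover operator $G = (I - 2\ket{\psi}\bra{\psi})(I - 2\Pi)$ preserves $\operatorname{span}\{\ket{\psi_{0}},\ket{\psi_{1}}\}$ and acts there as a rotation by $2\theta$, whence $\|\Pi G^{m}\ket{\psi}\|^{2} = \sin^{2}((2m+1)\theta)$ for every integer $m\ge 0$. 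This yields the atomic subroutine: prepare $G^{m}\ket{\psi}$ using $\Or(m)$ calls to the two reflections, measure $\{\Pi,I-\Pi\}$, and record a bit $b_{m}$ with $\E[b_{m}] = \sin^{2}((2m+1)\theta)$; equivalently, setting $x\coloneqq 1-2p = \cos 2\theta$ and using $T_{n}(\cos\varphi)=\cos(n\varphi)$, one gets $\E[\,1-2b_{m}\,] = \cos((4m+2)\theta) = T_{2m+1}(x)$. So a depth-$m$ experiment is a one-shot, $\pm1$-valued unbiased probe of the odd Chebyshev polynomial $T_{2m+1}$ at the unknown point $x$, from which $p=(1-x)/2$.

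Two properties must be obtained at once: bias $\le\eta$, and variance $\Or(p/r^{2})+\eta$ --- in contrast to the biased estimator of \lem{nae}, whose error also scales like $\sqrt{p}/r$ but does not vanish in expectation. Building $\wt{p}$ as a fixed (constant plus linear) combination of the one-shot probes $1-2b_{m}$ forces its variable part to approximate $-x/2$ in the $T_{2m+1}$-basis, which is dominated by the $m=0$ term and so has $\Theta(1)$ variance; the $1/r^{2}$ gain therefore requires adaptivity, and the factor $p$ out front requires first learning the scale of $p$. The plan is then: (a) a coarse, $\Or(\log r)$-depth stage that localizes $p$ to a dyadic window $[2^{-j-1},2^{-j}]$, feasible with $\wt{\Or}(2^{j/2})$ queries (geometrically summable); (b) a refinement stage that, conditioned on that scale, combines depth-$\Theta(r)$ experiments --- via $\ell_{1}$-sampling of a Jackson-type smoothing of the target polynomial, equivalently a post-processing of dithered phase estimation on $G$ --- into an estimate with additive fluctuation $\Or(\sqrt{p}/r)$, hence variance $\Or(p/r^{2})$; and (c) a debiasing layer that estimates and subtracts the residual bias left by the coupling between (a) and (b) and by phase-estimation discretization, iterated $\Or(\log\log r)$ times with geometrically increasing internal precision, pushing the bias below $\eta$ at a multiplicative cost $\Or(\log\log r\cdot\log(r/\eta))$. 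Truncating the output to $[-2\pi,2\pi]$ is harmless at the accuracy attained, and the small-$p$ regime is absorbed by (a) reporting ``below resolution,'' after which one outputs a value with the correct mean up to $\eta$.

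The crux --- and what makes the statement subtle enough to deserve its own theorem --- is forcing the variance to carry the factor $p$ rather than merely $1/r^{2}$ while simultaneously holding the bias at $\eta$ and the total query count at $\wt{\Or}(r)$; it is exactly this $p$-dependence that, when the routine is later applied (in \sec{simulation}) to a projector whose amplitude decays with the quadrature weight, makes the per-use error shrink with the amplitude and so prevents bias from accumulating across the many time steps of the dense-output quadrature. Threading the $p$-aware variance bound through the scale-finding, refinement, and debiasing pipeline at query cost $\Or(r\log\log r\log(r/\eta))$ is the content of Theorem~2.4 of~\cite{CH22}, which we invoke here.
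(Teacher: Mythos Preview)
Your proposal is correct in recognizing that the paper does not prove this statement: it is simply imported verbatim as Theorem~2.4 of~\cite{CH22} and used as a black box in the proof of \thm{hs-ae}. There is no proof in the paper to compare against, so your sketch of the underlying argument stands on its own as supplementary exposition rather than as an alternative route.
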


We state the complexity results of repeating quantum simulations with biased or unbiased amplitude estimation as below.

\begin{theorem}[Hamiltonian simulation with amplitude estimation]\label{thm:hs-ae}
We consider an instance of the quantum dense output problem in \prb{dense}.
There exist quantum algorithms producing an observable approximating the cost functional $J(u)$ with error $\epsilon\in(0,1)$, succeeding with probability $1-\delta$, with the following cost:\\
(i) Using the Biased Amplitude Estimation, the algorithm requires
\begin{equation}
\Or\Bigl( \frac{\|H\|T^3\log(1/\delta)}{\epsilon} \Bigr)
\end{equation}
queries to the matrix oracle for $H(t)$, and 
\begin{equation}
\Or\Bigl(\frac{T^2 \log(1/\delta)}{\epsilon}\Bigr)
\end{equation}
queries to the state preparation oracle for $\ket{\psi_{\mathrm{in}}}$;\\
(ii) Using the Unbiased Amplitude Estimation, the algorithm requires 
\begin{equation}
\Or\Bigl( \frac{\|H\|T^{2.5}\log(1/\delta)}{\epsilon} \Bigr).
\end{equation}
queries to the matrix oracle for $H(t)$, and 
\begin{equation}
\Or\Bigl(\frac{T^{1.5} \log(1/\delta)}{\epsilon}\Bigr)
\end{equation}
queries to the state preparation oracle for $\ket{\psi_{\mathrm{in}}}$.
\end{theorem}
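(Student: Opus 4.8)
The plan is to re-use the skeleton of the proof of \thm{hs-ht} and replace the plain Hadamard test at each quadrature node by amplitude estimation. As before, discretize $A=\int_0^T\braket{\psi(t)|O(t)|\psi(t)}\,\d t$ (this is the quantity $J$) with the same composite Clenshaw--Curtis rule, so that $\wt A=\sum_{k=1}^{N_t}\omega_k f_k$ with $f_k=\braket{\psi(t_k)|O(t_k)|\psi(t_k)}$, $|A-\wt A|\le\epsilon/2$, $N_t=\Or(T\log(1/\epsilon))$, and (from \app{quadrature}) $\sum_k|\omega_k|=\Or(T)$ and $\sum_k\omega_k^2=\Or(T)$. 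For each node one still propagates $\ket{\psi_{\mathrm{in}}}$ to $\ket{\psi(t_k)}$ by Hamiltonian simulation at cost $\Or(\|H\|t_k)$ per copy. To cast the estimation of $f_k$ as an amplitude-estimation instance, I would run the Hadamard test with the block encoding of $O(t_k)$: its acceptance probability is $p_k^{\re}=\tfrac12(1+\Re f_k)$, and analogously $p_k^{\im}$ for the imaginary part, which equals $\|\Pi\ket{\Psi_k}\|^2$ for an explicit preparation circuit and a computational-basis projector $\Pi$; the reflection $I-2\ket{\Psi_k}\bra{\Psi_k}$ then costs $\Or(1)$ uses of the block encoding of $O(t_k)$ plus one preparation of $\ket{\psi(t_k)}$ together with their inverses, i.e.\ $\Or(\|H\|t_k)$ queries to $H(t)$ and $\Or(1)$ queries to $\ket{\psi_{\mathrm{in}}}$. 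I would relegate this circuit and its ancilla bookkeeping to an appendix.

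\textbf{Part (i), biased estimation.} Apply \lem{nae} with a parameter $r_k$ at node $k$: the output estimates $p_k^{\re}$ (resp.\ $p_k^{\im}$), hence $\Re f_k$ (resp.\ $\Im f_k$), with additive error $\Or(1/r_k)$ and failure probability at most $\eta$, using $\Or(r_k\log(1/\eta))$ reflections. Because this estimator is biased, the per-node errors accumulate linearly, $\big|\wt A-\sum_k\omega_k\wt f_k\big|\le\sum_k|\omega_k|\,\Or(1/r_k)$, so taking $r_k=\Or(T/\epsilon)$ uniformly and using $\sum_k|\omega_k|=\Or(T)$ makes this $\le\epsilon/2$; a union bound over the $2N_t$ estimates with $\eta=\delta/(2N_t)$ keeps the overall failure probability below $\delta$. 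Summing the per-node costs gives $\sum_k\Or(\|H\|t_k\, r_k\log(1/\eta))=\Or(\|H\|T^3\log(1/\delta)/\epsilon)$ queries to $H(t)$ (using $\sum_k t_k=\Or(T^2\log(1/\epsilon))$ and absorbing logarithmic factors in $T$ and $\epsilon$) and $\sum_k\Or(r_k\log(1/\eta))=\Or(T^2\log(1/\delta)/\epsilon)$ queries to $\ket{\psi_{\mathrm{in}}}$.

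\textbf{Part (ii), unbiased estimation.} Apply \lem{nuae} with a uniform parameter $r$, obtaining for each $\Re f_k$ (resp.\ $\Im f_k$) an estimator $\wt f_k$ with $|\E\wt f_k-f_k|\le\Or(\eta)$ and $\V(\wt f_k)\le\Or(p_k/r^2+\eta)=\Or(1/r^2+\eta)$, using $\Or(r\log\log r\,\log(r/\eta))$ reflections. Then $\wt A':=\sum_k\omega_k\wt f_k$ has bias $|\E\wt A'-\wt A|\le\Or(\eta)\sum_k|\omega_k|=\Or(\eta T)$ and, by independence, variance $\V(\wt A')=\sum_k\omega_k^2\V(\wt f_k)=\Or(T/r^2+\eta T)$; the key gain over part (i) is that here variances add in quadrature rather than errors adding linearly. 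Choosing $r=\Or(\sqrt T/\epsilon)$ and $\eta=\Or(\epsilon^2/T)$ makes both the bias and the standard deviation $\Or(\epsilon)$, and a median of $\Or(\log(1/\delta))$ independent runs upgrades Chebyshev's constant-probability estimate to confidence $1-\delta$. The total is then $\Or(\log(1/\delta))\sum_k\Or(\|H\|t_k\, r)\cdot\polylog(T,1/\epsilon)=\Or(\|H\|T^{2.5}\log(1/\delta)/\epsilon)$ queries to $H(t)$ and $\Or(\log(1/\delta))\sum_k\Or(r)\cdot\polylog(T,1/\epsilon)=\Or(T^{1.5}\log(1/\delta)/\epsilon)$ queries to $\ket{\psi_{\mathrm{in}}}$.

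The routine parts are the Hadamard-test-to-amplitude-estimation reduction and the quadrature bookkeeping, which are essentially imported from \thm{hs-ht} and \app{quadrature}. The conceptual heart --- and the step to get right --- is the error accounting that separates (i) from (ii): with the biased estimator of \lem{nae} one is forced to spend $r_k=\Theta(T/\epsilon)$ per node precisely because the $N_t=\Theta(T)$ biases accumulate coherently, whereas the nearly-unbiased estimator of \lem{nuae} lets the per-node errors cancel as a random walk, so $r=\Theta(\sqrt T/\epsilon)$ suffices and one saves a full factor of $\sqrt T$ in both oracle counts. The one subtlety to verify there is that the residual bias $\eta$ of \lem{nuae} enters the final complexity only through logarithms, so that driving it down to $\Or(\epsilon^2/T)$ is essentially free; one should also confirm that $\sum_k|\omega_k|=\Or(T)$ and $\sum_k\omega_k^2=\Or(T)$ are indeed the only quadrature facts that both analyses require.
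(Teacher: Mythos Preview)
Your proposal is correct and follows essentially the same route as the paper: discretize by the composite Clenshaw--Curtis rule, replace the Hadamard test at each node by amplitude estimation, and distinguish (i) from (ii) by whether per-node errors accumulate linearly (forcing $r=\Theta(T/\epsilon)$) or only their variances add (permitting $r=\Theta(\sqrt{T}/\epsilon)$), with Chebyshev plus median-of-means handling the confidence boost in (ii). Your articulation of the $\sqrt{T}$ saving as ``biases add coherently versus errors canceling as a random walk'' is if anything clearer than the paper's own exposition, and your explicit Hadamard-test-to-amplitude reduction and the quadrature facts $\sum_k|\omega_k|=\Or(T)$, $\sum_k\omega_k^2=\Or(T)$ match what the paper uses.
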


\begin{proof}
Similar as \thm{hs-ht}, we aim to estimate the integral
\begin{equation}
A=\int_0^T \braket{O}_t\ud t = \int_0^T \braket{\psi(t)|O(t)|\psi(t)} \ud t.
\end{equation}
Given quadrature nodes $\{t_{1},\ldots,t_{N_t}\}$ and weights $\{\omega_{1},\ldots,\omega_{N_t}\}$ as introduced in \app{quadrature}, we consider
\begin{equation}
\wt{A} = \sum_{k=1}^{N_t} \omega_{k} \braket{O}_{t_{k}}
\label{eq:quadrature-sum}
\end{equation}
as an approximation to $A$ with $\abs{A-\wt{A}}\le \epsilon/2$.

For the $k$-th time step, we first propagate $\ket{\psi(0)}$ to obtain $U_k\ket{\psi(0)} = \ket{\psi(t_k)}$ by simulating $U_k = e^{-iHt_k}$, with cost $\wt O(\|H\|k\Delta t)$. We then perform the amplitude amplification and estimation for the correlation function
\begin{equation}
\braket{O}_{t_{k}} = \bra{\psi(t_k)} O(t_k) \ket{\psi(t_k)} = \bra{\psi(t_0)} U_k^{\dagger} O(t_k) U_k \ket{\psi(t_0)}.
\end{equation}

In all, we need to repeat quantum simulations $N_t$ times, where the simulation cost at the $k$-th stage is $\Or(\|H\|k\Delta t)$ for $k\in[N_t]$. The total time complexity of producing such observables equals to the \emph{product} of the simulation cost and the measurement cost. 

Now for each given $t_k$, we can evaluate $f_k=\braket{O}_{t_k}$ using the amplitude estimation. We note that the amplitude estimation with or (almost) without bias can result in different complexities. 

\paragraph{(i) Biased Amplitude Estimation:}

We estimate $f_k$ using the standard amplitude estimation. It suffices to take $r = \Or(1/\epsilon')$  in \lem{nae} to obtain $\wt a$ that is $\epsilon'$-close to $a$, with probability at least $1-\delta'$ and $\Or(r\log(1/\delta'))$ uses of $R_{\psi}$ and $R$.
To ensure that $\sum_{k=1}^{N_t}f_k$ in \eq{quadrature-sum} can reach the target precision $\epsilon$, it suffices to estimate each $f_k$ within $\epsilon'=\epsilon/T$, with overall probability $1-\delta$, $\delta' = \delta/N_t$, $N_t = T\log(1/\epsilon)$, with 
\begin{equation}
C_d = \Or\left(\frac{T\log(1/\delta)}{\epsilon}\right)
\end{equation} 
queries to the coherent implementation of $\ket{\psi_{t_k}}$, due to the use of the reflection $I - 2 \ket{\psi_{t_k}}\bra{\psi_{t_k}}$. Combining with the cost of processing $U_k\ket{\psi(0)} = \ket{\psi(t_k)}$ as stated above,
the algorithm therefore requires
\begin{equation}
\Or\Bigl(\sum_{k=1}^{N_t} \|H\|k\Delta t \cdot C_d \cdot N_s \Bigr) = \Or\Bigl( \frac{\|H\|T^3\log(1/\delta)}{\epsilon} \Bigr)
\end{equation}
queries to the matrix oracle for $H(t)$.

Regarding the state preparation, we need to prepare a number of quantum states $\Bigl\{ \ket{\psi(t_k)} \Bigr\}_{k=1}^{N_t}$ and each $\ket{\psi(t_k)}$ requires $C_d$ copies in the circuit of biased amplitude estimation. 
Overall, the algorithm takes 
\begin{equation}
N_t\cdot C_d = \Or\Bigl(\frac{T^2 \log(1/\delta)}{\epsilon}\Bigr)
\end{equation}
queries to the state preparation oracle for $\ket{\psi_{\mathrm{in}}}$.

\paragraph{(ii) Unbiased Amplitude Estimation:}

Using the unbiased amplitude estimation in \lem{nuae}, we can afford to estimate each $f_k$ to precision $\epsilon/\sqrt{T}$ using a circuit of depth 
\begin{equation}
C_d = \Or(\frac{\sqrt{T}}{\epsilon}).
\end{equation}
To see this, let
\begin{equation}
\wt{f}_k=\frac{1}{N_s}\sum_{l=1}^{N_s} \wt{f}_k^{(l)}.
\end{equation}
In \cref{lem:nuae}, we should choose $\eta=\Theta(\epsilon'^2),r = \Theta(1/\epsilon')$, so that
\begin{equation}
| \mathbb{E}\wt f_k - f_k | \le \eta, \qquad \V(\wt f_k) \le \frac{91f_k}{r^2} + \eta =\Or( \epsilon'^2),
\end{equation}
with $\Or(r\log\log(r)\log(r/\eta))$ uses of the reflection operators. Now use the fact that $\sum_{k=1}^{N_t} \omega_k=1,\sum_{k=1}^{N_t} \omega_k^2 =\Or(T)$, we can choose $\epsilon'=\epsilon/\sqrt{T}$, so that
\begin{equation}
\abs{ \sum_{k=1}^{N_t}\omega_k \mathbb{E}\wt{f}_k- \sum_{k=1}^{N_t}\omega_k f_k } =\Or(\epsilon^2/T), \qquad \V\left(\sum_{k=1}^{N_t}\omega_k \wt{f}_k \right) =\Or\left( \epsilon'^2\sum_{k=1}^{N_t}\omega_k^2\right)=\Or(\epsilon^2).
\end{equation}
Note that the bias can be neglected as long as $\epsilon/T\ll 1$. Then apply the Chebyshev inequality and then median of means, for any failure probability $0<\delta<1$, we can run the process above for $N_s=\Or(\log \delta^{-1})$ times to obtain an estimator to $\wt{A}$ denoted by $\mc{A}$, so that 
\begin{equation}
\mathbb{P}\left(\abs{\wt{A}-\mc{A}}\ge \frac{\epsilon}{2}\right)\le \delta.
\end{equation}
The total number of uses of reflection operators is thus $\Or((\sqrt{T}\log \delta^{-1})/\epsilon)$.

The total cost of the algorithm is therefore 
\begin{equation}
\Or\Bigl(\sum_{k=1}^{N_t} \|H\|k\Delta t \cdot C_d \cdot N_s \Bigr) = \Or\Bigl( \frac{\|H\|T^{2.5}\log(1/\delta)}{\epsilon} \Bigr)
\end{equation}
queries to the matrix oracle for $H(t)$. 

Regarding the state preparation, we need to prepare a number of quantum states $\Bigl\{ \ket{\psi(t_k)} \Bigr\}_{k=1}^{N_t}$ and each $\ket{\psi(t_k)}$ requires $C_d \cdot N_s$ copies in the circuit of unbiased amplitude estimation. 
In all, the algorithm takes
\begin{equation}
N_t\cdot C_d\cdot N_s = \Or\Bigl(\frac{T^{1.5} \log(1/\delta)}{\epsilon}\Bigr)
\end{equation}
queries to the state preparation oracle for $\ket{\psi_{\mathrm{in}}}$.
\end{proof}

\section{Quantum linear ODE solver for non-unitary dynamics}\label{sec:ode}

We now introduce the third approach for \prb{dense}. We turn to utilize quantum linear ODE solvers~\cite{Ber14,BCOW17,CL19,Kro22,BC22,JLY22a,JLY22b,ALWZ22,ALL23} to produce a Feynman-Kitaev history state of the Hamiltonian system. Such a history state encodes the full time-evolution of the solution, allowing us to perform an amplitude estimation once to evaluate the dense output on the full time interval.

\begin{theorem}[Quantum linear ODE solver with amplitude estimation]\label{thm:lode-ae}
We consider an instance of the quantum dense output problem in \prb{dense}.
There exist a quantum algorithm producing an observable approximating the cost functional $J(u)$ with error $\epsilon\in(0,1)$, succeeding with probability $1-\delta$, with 
\begin{equation}
\Or\Bigl( \frac{\|H\|T^2\log(1/\delta)}{\epsilon} \Bigr).
\end{equation}
queries to the matrix oracle for $H(t)$ and the state preparation oracle for $\ket{\psi_{\mathrm{in}}}$.
\end{theorem}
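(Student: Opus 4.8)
The idea of this third approach is to replace the $N_t$ \emph{independent} amplitude estimations of \thm{hs-ae} by a \emph{single} one, at the price of solving a non-unitary linear system. As in the earlier proofs I would first discretize: pick a composite Clenshaw--Curtis rule on a uniform grid $t_k=k\Delta t\in[0,T]$ with positive weights $\omega_k$, so that $\wt A:=\sum_{k=1}^{N_t}\omega_k\braket{O}_{t_k}$ satisfies $\abs{A-\wt A}\le\epsilon/2$ with $N_t=\Or(T\log(1/\epsilon))$, $\sum_k\omega_k=T$, and --- what matters below --- $\max_k\omega_k=\Or(T/N_t)$ (see \app{quadrature}); in particular $\abs{\wt A}\le T$.

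Next I would run a quantum linear ODE solver~\cite{Ber14,BCOW17,CL19} on the Schr\"odinger equation in \eq{dense} to prepare, to accuracy $\Theta(\epsilon/T)$ in norm, a \emph{Feynman--Kitaev history state} $\ket\Phi\propto\sum_{k}\ket k\otimes\ket{\psi(t_k)}$, i.e.\ the solution of a linear system $L\ket\Phi=\ket b$ whose block rows encode the time stepping $\ket{\psi(t_{k+1})}\approx e^{-iH\Delta t}\ket{\psi(t_k)}$ and whose right-hand side encodes $\ket{\psi_{\mathrm{in}}}$. Although the true dynamics is unitary, $L$ is not, which is exactly why a linear-system solver is needed rather than Hamiltonian simulation --- the ``non-unitary relaxation''. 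The decisive structural feature is norm preservation: every $\ket{\psi(t_k)}$ is a unit vector, so the history block of $\ket\Phi$ carries a constant (up to a $\polylog$ factor from the solver's internal discretization) share of the norm and no lossy final-state padding is required; equivalently the solver's amplification factor $g$ equals $1$, and the condition number of $L$ is $\wt\Or(\norm{H}T)$. Hence one invocation of the solver --- producing a unitary $V$ with $V\ket0=\ket{\wt\Phi}$, the post-selection success folded into ancillas via an optimal QLSS --- costs $\wt\Or(\norm{H}T)$ queries to the block encoding of $H(t)$ and $\wt\Or(\norm{H}T)$ queries to the preparation oracle for $\ket{\psi_{\mathrm{in}}}$.

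I would then extract $\wt A$ from $\ket\Phi$ by one amplitude estimation. Adjoin a flag qubit and apply the controlled rotation $\ket k\ket0\mapsto\ket k\bigl(\sqrt{N_t\omega_k/T}\,\ket0+\cdots\ket1\bigr)$, which is a legitimate rotation because $N_t\omega_k/T\le 1$; then apply a block encoding of the positive operator $\bigoplus_k\tfrac12\bigl(O(t_k)+I\bigr)$ on the system register. With $P$ the projector onto the ``success'' branch of the flag and of the block-encoding ancillas, one checks (using $\sum_k\omega_k=T$) that $a:=\bra\Phi P\ket\Phi=\tfrac12+\tfrac1{2T}\wt A\in[0,1]$, which is $\Theta(1)$. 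Recovering $\wt A$ to accuracy $\epsilon/2$ amounts to recovering $a$ to accuracy $\epsilon/(4T)$, so by \lem{nae} with $r=\Theta(T/\epsilon)$ and $\eta=\delta$ (the $\pi^2/r^2$ term is then $\Or(\epsilon^2/T^2)$, negligible) this uses $\Or\bigl((T/\epsilon)\log(1/\delta)\bigr)$ applications of $R_\Phi=2\ket\Phi\bra\Phi-I$ and $R=2P-I$, each $R_\Phi$ realized by one use of $V$ and one of $V^\dagger$. Multiplying the two costs gives $\Or\bigl((T/\epsilon)\log(1/\delta)\bigr)\cdot\wt\Or(\norm{H}T)=\wt\Or\bigl(\norm{H}T^2\log(1/\delta)/\epsilon\bigr)$ queries to the matrix oracle and the same count to the state-preparation oracle, which combined with the $\epsilon/2$ quadrature error yields the claim.

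I expect the main obstacle to be a careful accounting of the ODE solver so that no spurious factor of $T$ (or of $1/\epsilon$) leaks in: one must confirm that norm preservation really gives $g=1$, that the condition number of $L$ and the post-selection success probability are $\wt\Or(\norm{H}T)$ and $\Omega(1)$ up to $\polylog$ factors respectively, and that reaching solver accuracy $\Theta(\epsilon/T)$ costs only $\polylog(T/\epsilon)$ extra --- all of which interact with the choice of quadrature through the requirement $\max_k\omega_k=\Or(T/N_t)$. The non-PSD nature of $O(t)$ is handled by the shift $O\mapsto(O+I)/2$ above and is a minor point.
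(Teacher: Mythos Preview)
Your strategy is essentially the paper's: prepare the Feynman--Kitaev history state via a quantum linear ODE solver at cost $\wt\Or(\|H\|T)$ (using $g=1$ from norm preservation, exactly as you argue), then run a single global amplitude estimation at depth $\Or(T/\epsilon)$. The paper packages the observable slightly differently, defining $O_{\mathrm{sel}}=\sum_k\ket{k}\bra{k}\otimes\omega_k O(t_k)$ and estimating $\braket{\Phi|O_{\mathrm{sel}}|\Phi}=\wt A/(N_t{+}1)$ to precision $\epsilon/(N_t{+}1)$, which gives the same $r=\Theta(T/\epsilon)$ as your computation.

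One small slip in your extraction step: applying a block encoding of $M_k=\tfrac12(O(t_k)+I)$ and projecting onto the block-encoding ancillas yields probability $\bra{\psi_k}M_k^{\dagger}M_k\ket{\psi_k}=\bra{\psi_k}M_k^2\ket{\psi_k}$, not $\bra{\psi_k}M_k\ket{\psi_k}$. Consequently your $a$ picks up an extra $\tfrac1{4T}\sum_k\omega_k\braket{O(t_k)^2}$ term and is not $\tfrac12+\tfrac1{2T}\wt A$. The standard fix---a Hadamard-test ancilla controlling the block encoding, then amplitude estimation on that ancilla's outcome---recovers $\tfrac12(1+\braket{O_{\mathrm{sel}}})$ as a genuine probability and leaves the complexity unchanged, so this does not affect your overall argument.
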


\begin{proof}
Given an initial condition, we perform a composite  Clenshaw--Curtis quadrature rule (\app{quadrature}) to divide the time interval $[0,T]$ into $N_t = \Or(T\log(1/\epsilon))$ sub-intervals, with $0 = t_0 < t_{1} < \ldots < t_{N_t} = T$, $h_k = t_{k+1}-t_k$. For a time-independent Hamiltonian $H$, we construct $(N_t+ 1)n\times (N_t + 1)n$ linear system
\begin{equation}
L|\Psi\rangle=|B\rangle
\label{eq:linear_system}
\end{equation}
where $L$ is constructed from the block encoding of $H(t)$ (detailed encoding refers to e.g.,~\cite{BC22}), and the quantum states $|\Psi\rangle$ and $|B\rangle$ are
\renewcommand{\arraystretch}{1.4}
\begin{equation}
|\Psi\rangle = [\psi(t_0), \psi(t_1), \cdots, \psi(t_{N_t})]^T, \quad |B\rangle = [\psi_{\mathrm{in}}, 0, \cdots, 0]^T.
\end{equation}
\renewcommand{\arraystretch}{1} 
We can use the quantum linear ODE solvers such as~\cite[Theorem 2]{BC22} to produce the history state 
\begin{equation}
\ket{\Psi} = \frac{1}{\sqrt{N_t+1}} \sum_{k=0}^{N_t}|k\rangle|\psi(t_k)\rangle
\end{equation}
with 
\begin{equation}
\Or\Bigl(\|H\|T\cdot\polylog(1/\epsilon)\Bigr)
\end{equation}
queries to the matrix oracle for $H(t)$ and the state preparation oracle for $\ket{\psi_{\mathrm{in}}}$. Here we use the fact that $g = \frac{\max_{t\in[0,T]}\|\Psi(t)\|}{\|\Psi(T)\|} = 1$ and $\frac{\|\psi_{\mathrm{in}}\|}{\max_{t\in[0,T]}\|\Psi(t)\|} = 1$.

For a time-dependent Hamiltonian $H(t)$, we can employ the quantum time-dependent differential equation solvers based on quantum Dyson series~\cite{BC22}, quantum spectral methods~\cite{CL19,CLO20}, quantum time marching method~\cite{FLT22}, or linear combinations of Hamiltonian simulation/Schrodingerisation~\cite{ALL23,JLY22a,JLY22b}. For instance, we refer the algorithm based on quantum Dyson series~~\cite[Theorem 1]{BC22} to \app{Dyson}. 

Our goal is to estimate the integral  
\begin{equation}
A=\int_0^T \braket{O}_t\ud t = \int_0^T \braket{\psi(t)|O(t)|\psi(t)} \ud t
\end{equation}
Given quadrature nodes $\{t_{1},\ldots,t_{N_t}\}$ and weights $\{\omega_{1},\ldots,\omega_{N_t}\}$ as introduced in \app{quadrature}, we consider
\begin{equation}
\wt{A} = \sum_{k=1}^{N_t} \omega_{k} \braket{O}_{t_{k}}
\end{equation}
as an approximation to $A$ with $\abs{A-\wt{A}}\le \epsilon/2$. 
We define the selection observable $O_{\text{sel}}$ with the block-diagonal form
\begin{equation}
O_{\text{sel}} = \sum_{k=0}^{N_t}|k\rangle\langle k|\otimes \omega_{k} O(t_{k}),
\label{eq:matrixO}
\end{equation}
and we require that $\|O_{\text{sel}}\| \le 1$ (ensured by each $|\omega_k|\le1$ and $\|O(t_k)\| \le 1$), and $O_{\text{sel}}$ can be block-encoded by a unitary $U_{\text{sel}}$ as
\begin{equation}
U_{\text{sel}}=\begin{pmatrix}
O_{\text{sel}} & * \\
* & *
\end{pmatrix}.
\end{equation}

Such a block encoding $U_{\text{sel}}$ can be constructed by associating controlled registers with each block encoding of $O(t_{k})$ as stated in \prb{dense}.

\paragraph{Global Amplitude Estimation:}

We can estimate $\braket{\Psi|O_{\text{sel}}|\Psi}$ using amplitude estimation in \lem{nae}.
To estimate
\begin{equation}
\braket{\Psi|O_{\text{sel}}|\Psi} = \frac{1}{N_t+1}\sum_{k=0}^{N_t}\omega_{k}\braket{O}_{t_{k}} = \frac{1}{N_t+1}\wt A
\end{equation}
within precision $\epsilon/(N_t+1)$ with $N_t+1 = \Or(T\log(1/\epsilon))$,  we need a circuit of depth 
\begin{equation}
C_d = \Or(\frac{N_t}{\epsilon}) = \Or(\frac{T}{\epsilon}).
\end{equation}
Here we can use the above standard estimator even it is biased, since we only use it to estimate a single amplitude and avoid accumulating the biases.
The algorithm for evaluating $A$ takes
\begin{equation}
\Or\Bigl(\|H\|T \cdot C_d \cdot N_s  \Bigr) = \Or\Bigl( \frac{\|H\|T^2\log(1/\delta)}{\epsilon} \Bigr).
\end{equation}
queries to the matrix oracle for $H(t)$ and the state preparation oracle for $\ket{\psi_{\mathrm{in}}}$.
\end{proof}

\section{Quantum linearization algorithm for nonlinear dynamics}\label{sec:linearization}

In recent years, quantum algorithms for nonlinear differential equations have attracted tremendous attention, and several novel quantum linearization approaches have been developed to handle specific nonlinear problems~\cite{LKK20,LPG20,Jos20,DS20,XWG21,AFJ22,Kro22,LLA22,JL22,JLY22,LLL23,LYW23,LEBS23}. A large class of these linearization methods, such as Carleman linearization, are based on the Koopman von Neumann operator theory~\cite{LM98,Mez05,KBBP16,BK22}. Koopman theory forms the foundation of offering an infinite-dimensional linear representation of a general finite-dimensional nonlinear systems with or without including the original states, for which we can perform finite-dimensional truncation and apply quantum linear (differential) equation solvers to efficiently produce the quantum-encoding solutions. 

The generic linearization approach can only approximate well for weakly nonlinear systems. For generic nonlinear systems, the truncation error is hard to control, and there is a known worst case that cannot be efficiently approximated by quantum mechanics~\cite{LKK20}. However, for particular systems, we can offer an \emph{exact} finite-dimensional linear representation that includes the original states as observable functions. This is based on the Koopman Invariant Subspace (KIS)theory~\cite{BBPK16}. 

Koopman Invariant Subspace theory provides an operator-theoretic perspective on dynamical systems. It demonstrates that nonlinear dynamical systems associated with Hamiltonian flows could be analyzed with an infinite-dimensional linear operator, from which it is of great importance to find a finite-rank approximation. In particular, the infinite-dimensional linear representation that includes the original state variables and their polynomials are known as the Carleman embedding/linearization, as a special case of Koopman embedding/linearization. It is quite rare for a dynamical system to admit a finite-dimensional Koopman Invariant subspace that includes the state variables explicitly. Fortunately, our problem model modified below \prb{qdcd} has an exact finite-dimensional linear representation. We in fact offer the first example in quantum mechanics that satisfies the condition, as a contribution to nonlinear dynamical control theory. 

Given the quantum dynamics
\begin{equation}
    \frac{\d }{\d t}\ket{\psi} = -iH(t) \ket{\psi}, 
\end{equation}
we define
\begin{equation}
J(t) = \int_0^{t} \Bigl(\bra{\psi(\tau)}O(\tau)\ket{\psi(\tau)} + \frac{\mu}{2}u^2(\tau)\Bigr) ~\d \tau,
\end{equation}
which satisfies $J(0)=0$, $J(T)=A$ as target, and since $\bra{\psi(t)}O(t)\ket{\psi(t)}$ and $u^2(t)$ are continuous in $t$, we have
\begin{equation}
\frac{\d J(t)}{\d t} = 
\bra{\psi(t)}O(t)\ket{\psi(t)} + \frac{\mu}{2}u^2(t).
\end{equation}

We rewrite the original problem as a system of quantum-driven classical dynamics
\begin{equation}
\frac{\d }{\d t}   
  \begin{bmatrix}
    \ket{\psi} \\
    J \\
  \end{bmatrix} 
=
  \begin{bmatrix}
    -iH \ket{\psi} \\
    \bra{\psi}O\ket{\psi} + \frac{\mu}{2}u^2\\
  \end{bmatrix},
\label{eq:QDCD}
\end{equation}
where $\ket{\psi}\in\C^n$ is a quantum state, $J\in\C$ is a classical cost function, $H \in \C^{n\times n}$ is a Hamiltonian, and $O\in\C^{n\times n}$ is the observable operator. We aim to develop quantum algorithms for the initial value problem of \eq{QDCD} to obtain the final state $J(T)$.

In particular, we assume $O$ is a low-rank observable. Typical instances include the wave-function follower $O(t)=\ket{\phi(t)}\bra{\phi(t)}$~\cite{serban2005optimal} or the projection onto the allowed subspace~\cite{palao2008protecting}. The low-rank observable are also named as the few-body observable~\cite{HK19,HKP20,HCP22}, with potential applications in fidelity estimation~\cite{GHZ89,DKLP02} and entanglement verification~\cite{GT09}. 

We generally assume that $J(t)$ is positive and lower bounded as $J(t) = \Omega(1)$, and $J(t)$ and non-decreasing in terms of $t$ given semi-positive definite $O(t)$. Such class of cost functionals can be time-increasing or time-oscillatory, as illustrated in \app{example}.

We now restate the problem formulation \prb{dense} with additional low-rank assumptions.

\begin{problem}[Quantum low-rank dense output]\label{prb:qdcd}
A time-accumulated observable associated with a time-dependent quantum dynamics is given by
\begin{equation}
\begin{aligned}
    \frac{\d }{\d t}\ket{\psi(t)} &= -iH(t) \ket{\psi(t)}, \quad \ket{\psi(0)}=\ket{\psi_{\mathrm{in}}},\\
    J &= J(T) = \int_0^T \Bigl(\bra{\psi(t)}O(t)\ket{\psi(t)} + \frac{\mu}{2}u^2(t)\Bigr)  ~\d t.
\label{eq:low-rank-dense}
\end{aligned}
\end{equation}
Here $H(t)$ and $O(t)$ represent continuous Hermitian matrices with respect to $t$. We have access to unitaries that block encode $H(t)$ and $O(t)$ for all $t$, and $O(t)$ is (numerically) low-rank in the sense that the Hilbert-Schmidt norm of the observable $O$: $\|O\|_{\mathrm{HS}} \coloneqq \sqrt{\mathrm{Tr}(O^2)}$ is upper bounded by a constant independent of the dimension of the matrix. Additionally, we are provided with a state preparation oracle that prepares the initial state $\ket{\psi_{\mathrm{in}}}$. Our objective is to estimate the value of $J$ with a desired precision of $\epsilon$, within a given time duration $T > 0$.
\end{problem}

We apply the Carleman linearization method to derive the exact linear representation of \eq{QDCD} as
\begin{equation}
\frac{\d }{\d t}   
  \begin{bmatrix}
    J \\
    \ket{\psi}\ket{\psi^\ast} \\
  \end{bmatrix} 
=
  \begin{bmatrix}
    0 & P \\
    0 & (-iH)\otimes I + I\otimes (-iH^*) \\
  \end{bmatrix}  
  \begin{bmatrix}
    J \\
    \ket{\psi}\ket{\psi^\ast} \\
  \end{bmatrix}
+  
  \begin{bmatrix}
    \frac{\mu}{2}u^2 \\
    0 \\
  \end{bmatrix},
\label{eq:problem-kis}
\end{equation}
where $P(t)\in\C^{1\times n^2}$ vectorizes the operator $O(t)$ and satisfies 
\begin{equation}
\bra{\psi}O(t)\ket{\psi} = P(t)\ket{\psi}\ket{\psi^\ast}.
\end{equation}
Note that $\norm{P(t)}=\|O\|_{\mathrm{HS}}=\Or(1)$.

By applying the quantum algorithm in~\cite[Theorem 1]{LKK20} or~\cite[Theorem 4.1]{AFJ22}, we state our main algorithmic result as follows.  

\begin{theorem}[Quantum Carleman linearization with amplitude estimation]\label{thm:node-ae}
We consider an instance of the quantum low-rank dense output problem in \prb{qdcd}.
Assuming $H$ and $O$ (and hence $P$) are time-independent, there exists
a quantum algorithm producing an observable approximating the cost functional $J(u)$ with error $\epsilon\in(0,1)$, succeeding with probability $1-\delta$, with 
\begin{equation}
\Or\Bigl( \frac{\|H\|T\Gamma\log(1/\delta)}{\epsilon} \Bigr)
\end{equation}
queries to the matrix oracle for $H$, $P$ and the state preparation oracle for $\ket{\psi_{\mathrm{in}}}$, where we denote
\begin{equation}
\Gamma \coloneqq \frac{|J(T)|^2+1}{|J(T)|}.
\label{eq:Gamma}
\end{equation}
\end{theorem}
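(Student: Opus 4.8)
The plan is to run the pipeline already assembled in \sec{ode} — construct a linear ODE whose history state encodes the full trajectory, build a selection observable, and extract the dense output with a single amplitude estimation — but now applied to the \emph{linearized} system \eq{problem-kis} rather than to the original Schr\"odinger dynamics. The payoff is that the cost functional $J(t)$ is now a \emph{component} of the solution vector, so we no longer need a quadrature over time nodes for the observable; we just read off the last coordinate $J(T)$. This is what removes a factor of $T$ relative to \thm{lode-ae}, at the price of the factor $\Gamma$ coming from amplitude normalization.

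First I would set up the linear system. Since $H$ (and hence $O$, $P$) are time-independent, \eq{problem-kis} is an autonomous linear ODE $\frac{\d}{\d t} z(t) = M z(t) + b(t)$ with $z = [J,\, \ket{\psi}\ket{\psi^\ast}]^T$, where $M$ has the block-upper-triangular form displayed, and the forcing term $\frac{\mu}{2}u^2$ lives in the $J$-block. I would first verify the two structural facts that make the linearization \emph{exact and finite-dimensional}: (a) the block $(-iH)\otimes I + I\otimes(-iH^\ast)$ generates $U(t)\otimes U(t)^\ast$ where $U(t)=e^{-iHt}$, so $\ket{\psi(t)}\ket{\psi^\ast(t)}$ is closed under the flow; and (b) $P\ket{\psi}\ket{\psi^\ast} = \bra{\psi}O\ket{\psi}$ with $\norm{P} = \norm{O}_{\mathrm{HS}} = \Or(1)$, which is where the low-rank hypothesis of \prb{qdcd} enters — it keeps the block-encoding normalization of $M$ at $\Or(\norm{H})$ rather than something dimension-dependent. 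Then I would discretize on the composite Clenshaw--Curtis grid with $N_t = \Or(T\log(1/\epsilon))$ nodes exactly as in \thm{lode-ae}, forming the $(N_t+1)(n^2+1)\times(N_t+1)(n^2+1)$ linear system $L\ket{Z} = \ket{B}$ and invoking a quantum linear ODE solver (e.g.\ \cite[Theorem 2]{BC22} or \cite[Theorem 1]{LKK20}, \cite[Theorem 4.1]{AFJ22}) to produce the history state $\ket{Z} \propto \sum_k \ket{k}\ket{z(t_k)}$ with $\Or(\norm{H}T\cdot\polylog(1/\epsilon))$ queries. The relevant amplification parameter is $g = \max_{t}\norm{z(t)}/\norm{z(T)}$; since $J(t)$ is non-decreasing and $\Omega(1)$ while $\norm{\ket{\psi}\ket{\psi^\ast}}=1$, one gets $\max_t\norm{z(t)}^2 \le J(T)^2 + 1$ and $\norm{z(T)}^2 \ge J(T)^2+1 \ge$ (comparable), so $g = \Or(1)$, and similarly the initial-to-max ratio is $\Or(1)$.

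Next I would extract $J(T)$. I'd define a selection observable $O_{\mathrm{sel}}$ that projects onto the $J$-coordinate of the final time block, i.e.\ $O_{\mathrm{sel}} = \ket{N_t}\bra{N_t}\otimes \ket{e_J}\bra{e_J}$ up to a normalization $c$ chosen so $\norm{O_{\mathrm{sel}}}\le 1$; a rank-one projector is block-encodable with $\Or(1)$ overhead. Then $\bra{Z}O_{\mathrm{sel}}\ket{Z} = \frac{c\,|J(T)|^2}{(N_t+1)(J(T)^2+1)}$ — a single amplitude, call it $p$. The catch is that $p$ is small: it scales like $1/((N_t+1)(J(T)^2+1)/|J(T)|^2)$, so resolving $J(T)$ to additive error $\epsilon$ requires estimating $p$ to relative precision, which by \lem{nae} costs $C_d = \Or(1/\sqrt{p}\cdot \text{(target rel.\ error)}^{-1})$. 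Tracking the algebra: to get $J(T)$ within $\epsilon$ I need $p$ within roughly $\epsilon \cdot \frac{|J(T)|}{(N_t+1)(|J(T)|^2+1)} \cdot (\text{const})$; feeding $r = \Or(\sqrt{p}/(\text{abs err in }p))$ into \lem{nae} and simplifying yields $C_d = \Or\big(\frac{N_t}{\epsilon}\cdot\frac{|J(T)|^2+1}{|J(T)|}\big) = \Or(T\Gamma/\epsilon\cdot\polylog)$, with $\Gamma$ as in \eq{Gamma}. Because only a single amplitude is estimated, the bias in \lem{nae} is harmless, exactly as in \thm{lode-ae}. Multiplying by the history-state preparation cost $\Or(\norm{H}T\polylog)$ and the $\Or(\log(1/\delta))$ repetitions for success amplification gives the claimed $\Or(\norm{H}T\Gamma\log(1/\delta)/\epsilon)$ queries to the oracles for $H$, $P$, and $\ket{\psi_{\mathrm{in}}}$.

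The main obstacle is the amplitude-bookkeeping in the last step — keeping careful track of how the normalization factor $(N_t+1)$ from the history state, the factor $J(T)^2+1$ from the Carleman vector's norm, and the target error $\epsilon$ combine, so that the ``relative vs.\ absolute precision'' conversion in \lem{nae} produces precisely $\Gamma = (|J(T)|^2+1)/|J(T)|$ and not some other combination; I would also need to confirm that the quantum ODE solver's dependence on the solution-norm ratio $g$ doesn't smuggle in an extra $\Gamma$ or $T$ factor beyond what is claimed. A secondary point requiring care is ensuring the forcing term $\frac{\mu}{2}u^2(t)$ (if $u$ is a given control) is accessible/encodable within the stated oracle model, or else absorbing it into the definition so it does not affect the query count; since the statement only charges queries to $H$, $P$, and $\ket{\psi_{\mathrm{in}}}$, I would treat $u$ as classically known and fold $\frac{\mu}{2}\int u^2$ into $J$ as a classically-computable offset, so the quantum routine only needs to estimate $\int\bra{\psi}O\ket{\psi}$. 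Everything else — the exactness of the closure, the $\Or(1)$ bounds on $g$ and $\norm{P}$ — follows directly from the structural setup above and the hypotheses of \prb{qdcd}.
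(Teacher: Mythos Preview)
Your pipeline is right up to the extraction step, but there is a real gap in the final accounting that costs you a factor of $T$. You correctly compute the amplitude-estimation depth as $C_d = \Or(N_t\Gamma/\epsilon) = \Or(T\Gamma/\epsilon)$ for your setup, and then you multiply by the history-state preparation cost $\Or(\norm{H}T)$ and assert the product is $\Or(\norm{H}T\Gamma/\epsilon)$. But $\norm{H}T \cdot T\Gamma/\epsilon = \norm{H}T^2\Gamma/\epsilon$; a factor of $T$ has been silently dropped. With the rank-one projector $O_{\mathrm{sel}} = \ket{N_t}\bra{N_t}\otimes\ket{e_J}\bra{e_J}$ onto a \emph{single} time block, the amplitude you are estimating is $p = \Theta\bigl(|J(T)|^2/(N_t(|J(T)|^2+1))\bigr)$, and the $1/N_t$ cannot be avoided in the depth.

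The missing ingredient is \emph{padding} (it is flagged as such in \tab{summary}). The paper's linear system is built so that the history state carries, in addition to the trajectory blocks $k=0,\dots,N_t$, a further $N_t+1$ copies of the final block $[J(T);\ket{\psi(T)}\ket{\psi^\ast(T)}]$ at indices $k=N_t+1,\dots,2N_t+2$. The selection observable then projects onto the $J$-coordinate of \emph{all} the padded copies at once, $\wh O = \sum_{k=N_t+1}^{2N_t+2}\ket{k}\ket{00}\bra{k}\bra{00}$, so that the amplitude becomes $a = (N_t+1)|J(T)|^2/Q^2$ with $Q^2 = \Theta\bigl((N_t+1)(|J(T)|^2+1)\bigr)$; the $N_t$ cancels and $a = \Theta\bigl(|J(T)|^2/(|J(T)|^2+1)\bigr)$ is independent of $N_t$. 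Now the required precision on $a$ is $\Or(\epsilon/\Gamma)$, the depth is $C_d=\Or(\Gamma/\epsilon)$, and multiplying by the ODE-solver cost gives the claimed $\Or(\norm{H}T\Gamma/\epsilon)$.

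A secondary point you should tighten: the Carleman matrix $M$ is not anti-Hermitian (the off-diagonal $P$ block breaks it), so the non-positive-log-norm hypothesis behind \cite[Theorem 2]{BC22} does not apply as stated. The paper handles this in \lem{kis} by diagonalizing $M$ explicitly to obtain $\sup_{t}\norm{e^{Mt}} \le 1 + 2\norm{P}\norm{Q^{-1}} = \Or(\norm{P})$ and then invoking a solver (\cite{Kro22}) whose cost scales with $C(A)=\sup_t\norm{e^{At}}$. This is the second place the low-rank hypothesis $\norm{P}=\norm{O}_{\mathrm{HS}}=\Or(1)$ is used, beyond the block-encoding normalization you mention.
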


\begin{proof}

We consider the linear system
\begin{equation}
\wh L|\wh \Psi\rangle=|\wh B\rangle.
\label{eq:linear_system_padding}
\end{equation} 
Here we require the history state has the form
\begin{equation}
\begin{aligned}
\ket{\wh \Psi} =& \frac{1}{Q}\Biggl\{
\sum_{k=0}^{N_t}|k\rangle\Bigl[J^k|00\rangle + |\psi^k\rangle|\perp\rangle] + \sum_{k=N_t+1}^{2N_t+2}|k\rangle\Bigl[J(T)|00\rangle + |\psi(T)\rangle|\perp\rangle\Bigr]\Biggr\} \\
=& \frac{1}{Q}\Biggl\{
\sum_{k=0}^{N_t}|k\rangle\Bigl[J^k|0\rangle|0\rangle + \sum_{j,l=1}^n\psi_j^k(\psi_l^{\ast})^k|j\rangle|l\rangle\Bigr] + \sum_{k=N_t+1}^{2N_t+2}|k\rangle\Bigl[J(T)|0\rangle|0\rangle+\sum_{j,l=1}^n\psi_j(T)\psi_l^{\ast}(T)|j\rangle|l\rangle\Bigr]\Biggr\},
\end{aligned}
\label{eq:history_state_padding}
\end{equation}
which includes $N_t+1$ ($N_t = \Or(T\log(1/\epsilon))$) number of the final state $J = J(T)$ to boost the success probability. Here $J^k = J(t^k)$, $|\psi^k\rangle = \sum_{j=1}^n\psi_j^k|j\rangle = \sum_{j=1}^n\psi_j(t^k)|j\rangle$, $|\psi(T)\rangle = \sum_{j=1}^n\psi_j(T)|j\rangle$, $|\!\!\perp\rangle$ is orthogonal to $|00\rangle$, and the normalizing constant is denoted by
\begin{equation}
Q = \sqrt{\sum_{k=0}^{N_t}(|J^k|^2 + 1) + \sum_{k=N_t+1}^{2N_t+2}(|J(T)|^2 + 1)}.
\end{equation}

According to \lem{kis}, we can solve the linear system \eq{linear_system_padding} with 
\begin{equation}
\wt O\Bigl(\|P\|(\|H\|+\|P\|)T\Gamma\cdot\polylog(1/\epsilon)\Bigr)
\end{equation}
queries to the matrix oracle for $H$, $P$ and the state preparation oracle for $\ket{\psi_{\mathrm{in}}}$.

\paragraph{Amplitude Estimation with Padding:}

We consider the amplitude estimation in \lem{nae}, which gives an estimate of
\begin{equation}
a = \braket{\wh \Psi|\wh O|\wh \Psi} = \frac{N_t+1}{Q^2}|J(T)|^2.
\label{eq:a}
\end{equation}
Here 
\begin{equation}
\wh O = \sum_{k=N_t+1}^{2N_t+2}|k\rangle|0\rangle|0\rangle\langle k|\langle 0|\langle 0|
\end{equation}
that indicates the position of $J(T)$.

We now estimate the quantity of $a$ used in the amplitude estimation. On one side, since $J(t)$ is positive,
\begin{equation}
\frac{Q^2}{N_t+1}  \ge \frac{\sum_{k=N_t+1}^{2N_t+2}(|J(T)|^2 + 1)}{N_t+1} = |J(T)|^2+1.
\end{equation}
and on the other side, since $J^k \le J$ for non-decreasing $J(t)$,
\begin{equation}
\frac{Q^2}{N_t+1}  \le \frac{\sum_{k=0}^{N_t}(|J(T)|^2 + 1) + \sum_{k=N_t+1}^{2N_t+2}(|J(T)|^2 + 1)}{N_t+1} = 2|J(T)|^2+2.
\end{equation}

In the amplitude estimation, we aim to produce 
\begin{equation}
\wt a = \frac{N_t+1}{Q^2}|\wt J(T)|^2.
\end{equation}
that approximates $a$, such that $\wt J = \wt J(T)$ is an estimate of $J = J(T)$.
To satisfy $|\wt J - J| \le \epsilon$ with $\epsilon = o(J)$, we require
\begin{equation}
|\wt J^2 - J^2| \le (2J + \wt J - J)|\wt J - J| \le (2J+\epsilon)\epsilon \le 3J\epsilon,
\end{equation}
then it gives
\begin{equation}
|\wt a - a| = | \frac{\wt J^2}{Q^2/(N_t+1)} - \frac{J^2}{Q^2/(N_t+1)} | \le \frac{3J\epsilon}{Q^2/(N_t+1)} \le \frac{3J\epsilon}{J^2 + 1} = \frac{3\epsilon}{\Gamma},
\end{equation}
where $\Gamma$ is denoted as \eq{Gamma}. It suffices to take $t = \Or(\Gamma/\epsilon)$ and the same as the circuit depth
\begin{equation}
C_d = \Or(\frac{\Gamma}{\epsilon}),
\end{equation}
such that $| \wt a - a | \le \frac{3\epsilon}{\Gamma}$, and henceforth $|\wt J - J| \le \epsilon$.

Since the quantum linear ODE solver takes $\Or(\|P\|(\|H\|+\|P\|)T\Gamma)$ to produce the history state $\ket{\wh \Psi}$ in \eq{history_state_padding},  the algorithm for evaluating $A=J(T)$ takes 
\begin{equation}
\Or\Bigl(\|P\|(\|H\|+\|P\|)T\Gamma \cdot C_d \cdot N_s  \Bigr) = \Or\Bigl( \frac{\|P\|(\|H\|+\|P\|)T\Gamma\log(1/\delta)}{\epsilon} \Bigr).
\end{equation}
queries to the matrix oracle for $H$, $P$ and the state preparation oracle for $\ket{\psi_{\mathrm{in}}}$.

Using $\|P\| = \|O\|_{\mathrm{HS}} = \sqrt{\mathrm{Tr}(O^2)} = \Or(1)$, the above complexity can be simplified as 
\begin{equation}
\Or\Bigl( \frac{\|H\|T\Gamma\log(1/\delta)}{\epsilon} \Bigr).
\end{equation}
\end{proof}

This approach achieves the complexity $\Or(T\Gamma/\epsilon)$. Here $\Gamma$ depends on the time-varying behavior of the observables. We refer the examples with different complexity results to \app{example}.

\section{Applications}\label{sec:application}

We introduce several prototype applications of \prb{dense}, including quantum control and spectroscopic computation.

Quantum control plays a pivotal role in the development of quantum technologies such as quantum computing, quantum simulations and quantum sensing.  There are kind classes of control problems: Mayer, Lagrange, and Bolza. A problem of Mayer describes a situation where the cost is determined by the final state and time; a problem of Lagrange describes a situation where the cost accumulates with time; and a problem of Bolza is a combination of problems of Mayer and Lagrange.

We consider a controlled quantum system with a general cost of the Bolza type~\cite{Dal21,BCR10,RWP09,WG07}
\begin{equation}
\begin{aligned}
    \frac{\d }{\d t}\ket{\psi} &= -iH(u(t)) \ket{\psi}, \\
    J(u) &= \int_0^T L\Bigl(\ket{\psi(t)}, u(t), t\Bigr) ~\d t + G\Bigl(\ket{\psi(T)}, T\Bigr).
\end{aligned}
\end{equation}
Here $\ket{\psi}\in\C^n$ is a quantum state, $u\in\C$ is a control function, and $H(u) \in \C^{n\times n}$ is a Hamiltonian determined by the control $u$. We call $L\Bigl(\ket{\psi(t)}, u(t), t\Bigr)$ and $G\Bigl(\ket{\psi(T)}, T\Bigr)$ as running cost and terminal cost.
The problems whose cost functions containing only the terminal or running cost are called Mayer or Lagrange type; and a problem combines both the terminal and running costs is named as Bolza type~\cite{Dal21}.

In our paper, we can express the cost functional as observable functions
\begin{equation}
\begin{aligned}
    L\Bigl(\ket{\psi(t)}, u(t), t\Bigr) &= \bra{\psi(t)}O(t)\ket{\psi(t)} + \frac{\mu}{2}u^2(t) \\
    G\Bigl(\ket{\psi(T)}, T\Bigr) &= \bra{\psi(T)}O(T)\ket{\psi(T)}.
\label{eq:control}
\end{aligned}
\end{equation}
Particularly, we consider certain few-body observables in the running cost, such as the form $O(t) = \ket{\phi(t)}\bra{\phi(t)}$~\cite{HK19,HKP20,HCP22}. When $\ket{\phi(t)} = \ket{\phi}$, it is often used in the state trapping problem, for which we hope $\ket{\psi(t)}$ can approach to the desired state $\ket{\phi}$ as fast as possible and try to stay in that state~\cite{RWP09}. Fidelity estimation with pure target states can also be used in quantum communication (e.g. when $\ket{\phi}$ is a GHZ state~\cite{GHZ89}), fault-tolerant quantum computation (e.g. when $\ket{\phi}$ is a toric code ground st~\cite{GHZ89}), and serve as (bipartite) entanglement witness~\cite{GT09}. In more general, given a desired pure state trajectory $\ket{\phi(t)}$, we can estimate the fidelity $\ket{\phi(t)}\bra{\phi(t)}$ with the change of the evolution time. 

Another formulation of the few-body observable is $O(t) = \sum_{k=1}^r \alpha_k\ket{\phi_k}\bra{\phi_k}$, where $\ket{\phi_k}$ is the $k$-th energy eigenstate of a certain quantum system. We assume $\ket{\psi(t)}$ has a large overlap lying in the eigenspace. Then $|\bra{\psi(t)}O\ket{\psi(t)}|$ can be used to estimate the overlap of $\ket{\psi(t)}$ lying in the low-lying energy eigenspace.

The policy iteration is popular in optimal control or modern predictive control problems. Given a known control $u$, we propagate the controlled quantum dynamics and evaluate the cost functional $J(u)$ in order to update the control $u$. In quantum control, our algorithm can be utilized to estimate $J(u)$ and update $u$. Moreover, it has potential to implement as a subroutine in Variational Quantum Algorithms (VQA)~\cite{CAB21}, Quantum Approximate Optimization Algorithms (QAOA)~\cite{FGG14}, and quantum reinforcement learning~\cite{MUP22}.

Spectroscopic computation is another class of applications that falls into the scope of \prb{dense}. Understanding the spectrum of molecular systems is the first step to understand the effects of molecular aggregates and polymers in computational chemistry. However, it is challenging to quantitatively calculate the spectrum of many-particle dynamics. Several numerical approaches have been developed for computing the spectrum problems based on the Time-Dependent Density Functional Theory (TDDFT) algorithms~\cite{tussupbayev2015comparison} and the Time-Dependent Density Matrix Renormalization Group (TD-DMRG) algorithms~\cite{ren2018time}. 

For instance, we consider the spectra estimation from the quantum dynamics
\begin{equation}
\begin{aligned}
    \frac{\d }{\d t}\ket{\psi} &= -iH \ket{\psi}, \\
    J(\omega) &= \int_0^T e^{i\omega t}\bra{\psi(t)}O(t)\ket{\psi(t)} ~\d t.
\label{eq:spectra}
\end{aligned}
\end{equation}
Here $J(\omega)$ is calculated by taking Fourier transform of the time correlation function, as a type of dense outputs. The factor $e^{i\omega t}$ can be substituted by $\cos(\omega t)$ and $\sin(\omega t)$ instead. 

In zero and finite temperature TD-DMRG, $J(\omega)$ in \eq{spectra} is used to compute the linear absorption and fluorescence spectra of molecular aggregates~\cite{ren2018time}. The spectral analysis of the time correlation function has also been broadly applied to estimate resonance states of molecular systems~\cite{wall1995extraction,mandelshtam1997low,mandelshtam1997spectral}.

\section{Discussion}\label{sec:discussion}

In our work, we manage to estimate the time-accumulated observable associated with the quantum dynamics in \prb{dense} within error tolerance $\epsilon$, and $\epsilon$ is supposed to be independent of the evolution time $T$. In \prb{qdcd} in which we rewrite the original system as a quantum-driven classical dynamics, the cost function $|J(t)|$ can either increases linearly in terms of $t$ when $J(t)>0$ or $J(t)<0$ for all $t$; or $|J(t)|$ can be slowly time-varying, i.e. $J(t) = \Or(\polylog(t))$ for any $t>0$, such as in oscillatory systems. The time-varying behavior of $|J(t)|$ relies on features such as the overlap between the eigenstates of $H(t)$, the observable $O(t)$, and the inhomogeneity (detailed discussions refer to Section 4.5 of~\cite{ALWZ22}). We illustrate simple examples of time-increasing or time-oscillatory observables in \app{example}. In our unified framework, our goal is to upper bound the additive error to $J(t)$ as $\epsilon$ for all cases. The error measurement might change while there is an additional assumption. For instance, it is more desirable to replace $\epsilon$ by $\epsilon T$ for strictly time-increasing $|J(t)|$. It would be of interest to investigate the complexity with respect to different error measurement.

We are concerned with lower bound or fast-forwarding results of \prb{dense}. For general Hamiltonian systems, the no-fast-forwarding theorem gives a $\Omega(T)$ lower bound~\cite{BAC07}, and the Heisenberg limit gives a $\Omega(1/\epsilon)$ lower bound~\cite{AA17}. Henceforth, the upper bound $\Or(T/\epsilon)$ that we achieve in \thm{node-ae} is nearly tight for separately $T$ and $1/\epsilon$. It remains an open problem whether $\Theta(T/\epsilon)$ should be the lower bound for joint $T$ and $1/\epsilon$, or it could be further improved as $\Or(T + T^c/\epsilon)$, $c<1$. Besides, it is appealing to fast-forward particular types of quantum dynamics~\cite{LW18} or non-quantum dynamics~\cite{ALWZ22} and more efficiently produce dense outputs.

We have briefly introduced prototype applications in quantum control and spectroscopic computation. While the output only depends on the single final state, there is a recent developed quantum algorithm for the quantum control problem of the Mayer type~\cite{LW23}. When contemplating applications, we hope future work can investigate quantum algorithms with end-to-end settings. 

.

\section*{Acknowledgments} 

JPL acknowledges the support by the National Science Foundation (PHY-1818914), the NSF Quantum Leap Challenge Institute (QLCI) program (OMA-2016245, OMA-2120757), a Simons Foundation award (No. 825053), and the Simons Quantum Postdoctoral Fellowship. This material is based upon work supported by the U.S. Department of Energy, Office of Science, National Quantum Information Science Research Centers, Quantum Systems Accelerator, and  in  part  by  the  Applied Mathematics Program of the US Department of Energy (DOE) Office of Advanced Scientific Computing Research under contract number DE-AC02-05CH1123 (LL). LL is a Simons Investigator in Mathematics.
We thank Sitan Chen, Soonwon Choi, Di Fang, Aram Harrow, and Seth Lloyd for helpful discussions.

\bibliographystyle{myhamsplain}
\bibliography{KIS}

\providecommand{\bysame}{\leavevmode\hbox to3em{\hrulefill}\thinspace}
\begin{thebibliography}{10}

\bibitem{AFL20}
Dong An, Di~Fang, and Lin Lin, \emph{Time-dependent unbounded {H}amiltonian
  simulation with vector norm scaling}, Quantum \textbf{5} (2021), 459,
  \href{https://arxiv.org/abs/2012.13105}{arXiv:2012.13105}.

\bibitem{AFL22}
Dong An, Di~Fang, and Lin Lin, \emph{Time-dependent {H}amiltonian simulation of
  highly oscillatory dynamics and superconvergence for {S}chr{\"o}dinger
  equation}, Quantum \textbf{6} (2022), 690,
  \href{https://arxiv.org/abs/2111.03103}{arXiv:2111.03103}.

\bibitem{ALL23}
Dong An, Jin-Peng Liu, and Lin Lin, \emph{Linear combination of {H}amiltonian
  simulation for non-unitary dynamics with optimal state preparation cost},
  2023, \href{https://arxiv.org/abs/2303.01029}{arXiv:2303.01029}.

\bibitem{ALWZ22}
Dong An, Jin-Peng Liu, Daochen Wang, and Qi~Zhao, \emph{A theory of quantum
  differential equation solvers: limitations and fast-forwarding}, 2022,
  \href{https://arxiv.org/abs/2211.05246}{arXiv:2211.05246}.

\bibitem{AA17}
Yosi Atia and Dorit Aharonov, \emph{Fast-forwarding of {H}amiltonians and
  exponentially precise measurements}, Nature Communications \textbf{8} (2017),
  no.~1, 1--9, \href{https://arxiv.org/abs/1610.09619}{arXiv:1610.09619}.

\bibitem{BWC17}
Ryan Babbush, Nathan Wiebe, Jarrod McClean, James McClain, Hartmut Neven, and
  Garnet Kin-Lic Chan, \emph{Low-depth quantum simulation of materials},
  Physical Review X \textbf{8} (2018), no.~1, 011044,
  \href{https://arxiv.org/abs/1706.00023}{arXiv:1706.00023}.

\bibitem{BBMC20}
Bela Bauer, Sergey Bravyi, Mario Motta, and Garnet Kin-Lic Chan, \emph{Quantum
  algorithms for quantum chemistry and quantum materials science}, Chemical
  Reviews \textbf{120} (2020), no.~22, 12685--12717,
  \href{https://arxiv.org/abs/2001.03685}{arXiv:2001.03685}.

\bibitem{Ber14}
Dominic~W. Berry, \emph{High-order quantum algorithm for solving linear
  differential equations}, Journal of Physics A: Mathematical and Theoretical
  \textbf{47} (2014), no.~10, 105301,
  \href{https://arxiv.org/abs/1010.2745}{arXiv:1010.2745}.

\bibitem{BAC07}
Dominic~W. Berry, Graeme Ahokas, Richard Cleve, and Barry~C. Sanders,
  \emph{Efficient quantum algorithms for simulating sparse {H}amiltonians},
  Communications in Mathematical Physics \textbf{270} (2007), 359--371,
  \href{https://arxiv.org/abs/quant-ph/0508139}{arXiv:quant-ph/0508139}.

\bibitem{BCC15}
Dominic~W. Berry, Andrew~M. Childs, Richard Cleve, Robin Kothari, and
  Rolando~D. Somma, \emph{Simulating {H}amiltonian dynamics with a truncated
  {T}aylor series}, Physical Review Letters \textbf{114} (2015), no.~9, 090502,
  \href{https://arxiv.org/abs/1412.4687}{arXiv:1412.4687}.

\bibitem{BCC13}
Dominic~W. Berry, Andrew~M. Childs, Richard Cleve, Robin Kothari, and
  Rolando~D. Somma, \emph{Exponential improvement in precision for simulating
  sparse {H}amiltonians}, Forum of Mathematics, Sigma, vol.~5, Cambridge
  University Press, 2017,
  \href{https://arxiv.org/abs/1312.1414}{arXiv:1312.1414}.

\bibitem{BCK15}
Dominic~W. Berry, Andrew~M. Childs, and Robin Kothari, \emph{Hamiltonian
  simulation with nearly optimal dependence on all parameters}, IEEE 56th
  Annual Symposium on Foundations of Computer Science, pp.~792--809, IEEE,
  2015, \href{https://arxiv.org/abs/1501.01715}{arXiv:1501.01715}.

\bibitem{BCOW17}
Dominic~W. Berry, Andrew~M. Childs, Aaron Ostrander, and Guoming Wang,
  \emph{Quantum algorithm for linear differential equations with exponentially
  improved dependence on precision}, Communications in Mathematical Physics
  \textbf{356} (2017), no.~3, 1057--1081,
  \href{https://arxiv.org/abs/1701.03684}{arXiv:1701.03684}.

\bibitem{BCS20}
Dominic~W. Berry, Andrew~M. Childs, Yuan Su, Xin Wang, and Nathan Wiebe,
  \emph{Time-dependent {H}amiltonian simulation with ${L}^{1}$-norm scaling},
  Quantum \textbf{4} (2020), 254,
  \href{https://arxiv.org/abs/1906.07115}{arXiv:1906.07115}.

\bibitem{BC22}
Dominic~W. Berry and Pedro Costa, \emph{Quantum algorithm for time-dependent
  differential equations using {D}yson series}, 2022,
  \href{https://arxiv.org/abs/2212.03544}{arXiv:2212.03544}.

\bibitem{BHM02}
Gilles Brassard, Peter Hoyer, Michele Mosca, and Alain Tapp, \emph{Quantum
  amplitude amplification and estimation}, Contemporary Mathematics
  \textbf{305} (2002), 53--74,
  \href{https://arxiv.org/abs/quant-ph/0005055}{arXiv:quant-ph/0005055}.

\bibitem{BCR10}
Constantin Brif, Raj Chakrabarti, and Herschel Rabitz, \emph{Control of quantum
  phenomena: past, present and future}, New Journal of Physics \textbf{12}
  (2010), no.~7, 075008,
  \href{https://arxiv.org/abs/0912.5121}{arXiv:0912.5121}.

\bibitem{BBPK16}
Steven~L. Brunton, Bingni~W. Brunton, Joshua~L. Proctor, and J.~Nathan Kutz,
  \emph{Koopman invariant subspaces and finite linear representations of
  nonlinear dynamical systems for control}, PloS one \textbf{11} (2016), no.~2,
  e0150171, \href{https://arxiv.org/abs/1510.03007}{arXiv:1510.03007}.

\bibitem{BK22}
Steven~L. Brunton and J.~Nathan Kutz, \emph{Data-driven science and
  engineering: Machine learning, dynamical systems, and control}, Cambridge
  University Press, 2022.

\bibitem{CRO18}
Yudong Cao, Jonathan Romero, Jonathan~P. Olson, Matthias Degroote, Peter~D.
  Johnson, M{\'a}ria Kieferov{\'a}, Ian~D. Kivlichan, Tim Menke, Borja
  Peropadre, Nicolas P.~D. Sawaya, et~al., \emph{Quantum chemistry in the age
  of quantum computing}, Chemical Reviews \textbf{119} (2019), no.~19,
  10856--10915, \href{https://arxiv.org/abs/1812.09976}{arXiv:1812.09976}.

\bibitem{CAB21}
Marco Cerezo, Andrew Arrasmith, Ryan Babbush, Simon~C. Benjamin, Suguru Endo,
  Keisuke Fujii, Jarrod~R. McClean, Kosuke Mitarai, Xiao Yuan, Lukasz Cincio,
  et~al., \emph{Variational quantum algorithms}, Nature Reviews Physics
  \textbf{3} (2021), no.~9, 625--644,
  \href{https://arxiv.org/abs/2012.09265}{arXiv:2012.09265}.

\bibitem{CCH22}
Yu-An Chen, Andrew~M. Childs, Mohammad Hafezi, Zhang Jiang, Hwanmun Kim, and
  Yijia Xu, \emph{Efficient product formulas for commutators and applications
  to quantum simulation}, Physical Review Research \textbf{4} (2022), no.~1,
  013191, \href{https://arxiv.org/abs/2111.12177}{arXiv:2111.12177}.

\bibitem{CLLL22}
Andrew~M. Childs, Jiaqi Leng, Tongyang Li, Jin-Peng Liu, and Chenyi Zhang,
  \emph{Quantum simulation of real-space dynamics}, Quantum \textbf{6} (2022),
  860, \href{https://arxiv.org/abs/2203.17006}{arXiv:2203.17006}.

\bibitem{CL19}
Andrew~M. Childs and Jin-Peng Liu, \emph{Quantum spectral methods for
  differential equations}, Communications in Mathematical Physics \textbf{375}
  (2020), 1427--1457,
  \href{https://arxiv.org/abs/1901.00961}{arXiv:1901.00961}.

\bibitem{CLO20}
Andrew~M. Childs, Jin-Peng Liu, and Aaron Ostrander, \emph{High-precision
  quantum algorithms for partial differential equations}, Quantum \textbf{5}
  (2021), 574, \href{https://arxiv.org/abs/2002.07868}{arXiv:2002.07868}.

\bibitem{CMN17}
Andrew~M. Childs, Dmitri Maslov, Yunseong Nam, Neil~J. Ross, and Yuan Su,
  \emph{Toward the first quantum simulation with quantum speedup}, Proceedings
  of the National Academy of Sciences \textbf{115} (2018), no.~38, 9456--9461,
  \href{https://arxiv.org/abs/1711.10980}{arXiv:1711.10980}.

\bibitem{COS19}
Andrew~M. Childs, Aaron Ostrander, and Yuan Su, \emph{Faster quantum simulation
  by randomization}, Quantum \textbf{3} (2019), 182,
  \href{https://arxiv.org/abs/1805.08385}{arXiv:1805.08385}.

\bibitem{CST21}
Andrew~M. Childs, Yuan Su, Minh~C. Tran, Nathan Wiebe, and Shuchen Zhu,
  \emph{Theory of trotter error with commutator scaling}, Physical Review X
  \textbf{11} (2021), no.~1, 011020,
  \href{https://arxiv.org/abs/1912.08854}{arXiv:1912.08854}.

\bibitem{CH22}
Arjan Cornelissen and Yassine Hamoudi, \emph{A sublinear-time quantum algorithm
  for approximating partition functions}, Proceedings of the 2023 Annual
  ACM-SIAM Symposium on Discrete Algorithms (SODA), pp.~1245--1264, SIAM, 2023,
  \href{https://arxiv.org/abs/2207.08643}{arXiv:2207.08643}.

\bibitem{Dal21}
Domenico d'Alessandro, \emph{Introduction to quantum control and dynamics},
  Chapman and hall/CRC, 2021.

\bibitem{DKLP02}
Eric Dennis, Alexei Kitaev, Andrew Landahl, and John Preskill,
  \emph{Topological quantum memory}, Journal of Mathematical Physics
  \textbf{43} (2002), no.~9, 4452--4505,
  \href{https://arxiv.org/abs/quant-ph/0110143}{arXiv:quant-ph/0110143}.

\bibitem{DS20}
Ilya~Y. Dodin and Edward~A. Startsev, \emph{On applications of quantum
  computing to plasma simulations}, Physics of Plasmas \textbf{28} (2021),
  no.~9, 092101, \href{https://arxiv.org/abs/2005.14369}{arXiv:2005.14369}.

\bibitem{FLT22}
Di~Fang, Lin Lin, and Yu~Tong, \emph{Time-marching based quantum solvers for
  time-dependent linear differential equations}, Quantum \textbf{7} (2023),
  955, \href{https://arxiv.org/abs/2208.06941}{arXiv:2208.06941}.

\bibitem{FGG14}
Edward Farhi, Jeffrey Goldstone, and Sam Gutmann, \emph{A quantum approximate
  optimization algorithm}, 2014,
  \href{https://arxiv.org/abs/1411.4028}{arXiv:1411.4028}.

\bibitem{Fey82}
Richard~P. Feynman, \emph{Simulating physics with computers}, International
  journal of theoretical physics \textbf{21} (1982), no.~6, 467--488.

\bibitem{GHZ89}
Daniel~M. Greenberger, Michael~A. Horne, and Anton Zeilinger, \emph{Going
  beyond bell’s theorem}, Bell’s theorem, quantum theory and conceptions of
  the universe, Springer, 1989,
  \href{https://arxiv.org/abs/0712.0921}{arXiv:0712.0921}, pp.~69--72.

\bibitem{GT09}
Otfried G{\"u}hne and G{\'e}za T{\'o}th, \emph{Entanglement detection}, Physics
  Reports \textbf{474} (2009), no.~1-6, 1--75,
  \href{https://arxiv.org/abs/0811.2803}{arXiv:0811.2803}.

\bibitem{HairerOstermann1990}
E.~Hairer and A.~Ostermann, \emph{{Dense output for extrapolation methods}},
  Numer. Math. \textbf{58} (1990), no.~1, 419--439.

\bibitem{HCP22}
Hsin-Yuan Huang, Sitan Chen, and John Preskill, \emph{Learning to predict
  arbitrary quantum processes}, 2022,
  \href{https://arxiv.org/abs/2210.14894}{arXiv:2210.14894}.

\bibitem{HK19}
Hsin-Yuan Huang and Richard Kueng, \emph{Predicting features of quantum systems
  from very few measurements}, 2019,
  \href{https://arxiv.org/abs/1908.08909}{arXiv:1908.08909}.

\bibitem{HKP20}
Hsin-Yuan Huang, Richard Kueng, and John Preskill, \emph{Predicting many
  properties of a quantum system from very few measurements}, Nature Physics
  \textbf{16} (2020), no.~10, 1050--1057,
  \href{https://arxiv.org/abs/2002.08953}{arXiv:2002.08953}.

\bibitem{Imhof1963}
J.~P. Imhof, \emph{{On the method for numerical integration of Clenshaw and
  Curtis}}, Numerische Mathematik \textbf{5} (1963), no.~1, 138--141.

\bibitem{JL22}
Shi Jin and Nana Liu, \emph{Quantum algorithms for computing observables of
  nonlinear partial differential equations}, 2022,
  \href{https://arxiv.org/abs/2202.07834}{arXiv:2202.07834}.

\bibitem{JLY22a}
Shi Jin, Nana Liu, and Yue Yu, \emph{Quantum simulation of partial differential
  equations via {S}chrodingerisation}, 2022,
  \href{https://arxiv.org/abs/2212.13969}{arXiv:2212.13969}.

\bibitem{JLY22b}
Shi Jin, Nana Liu, and Yue Yu, \emph{Quantum simulation of partial differential
  equations via {S}chrodingerisation: technical details}, 2022,
  \href{https://arxiv.org/abs/2212.14703}{arXiv:2212.14703}.

\bibitem{JLY22}
Shi Jin, Nana Liu, and Yue Yu, \emph{Time complexity analysis of quantum
  algorithms via linear representations for nonlinear ordinary and partial
  differential equations}, 2022,
  \href{https://arxiv.org/abs/2209.08478}{arXiv:2209.08478}.

\bibitem{JLP11}
Stephen~P. Jordan, Keith S.~M. Lee, and John Preskill, \emph{Quantum algorithms
  for quantum field theories}, Science \textbf{336} (2012), no.~6085,
  1130--1133, \href{https://arxiv.org/abs/1111.3633}{arXiv:1111.3633}.

\bibitem{Jos20}
Ilon Joseph, \emph{Koopman-von {N}eumann approach to quantum simulation of
  nonlinear classical dynamics}, Physical Review Research \textbf{2} (2020),
  no.~4, 043102, \href{https://arxiv.org/abs/2003.09980}{arXiv:2003.09980}.

\bibitem{KJL08}
Ivan Kassal, Stephen~P. Jordan, Peter~J. Love, Masoud Mohseni, and Al{\'a}n
  Aspuru-Guzik, \emph{Polynomial-time quantum algorithm for the simulation of
  chemical dynamics}, Proceedings of the National Academy of Sciences
  \textbf{105} (2008), no.~48, 18681--18686,
  \href{https://arxiv.org/abs/0801.2986}{arXiv:0801.2986}.

\bibitem{KWB17}
Ian~D. Kivlichan, Nathan Wiebe, Ryan Babbush, and Al{\'a}n Aspuru-Guzik,
  \emph{Bounding the costs of quantum simulation of many-body physics in real
  space}, Journal of Physics A: Mathematical and Theoretical \textbf{50}
  (2017), no.~30, 305301,
  \href{https://arxiv.org/abs/1608.05696}{arXiv:1608.05696}.

\bibitem{Kro22}
Hari Krovi, \emph{Improved quantum algorithms for linear and nonlinear
  differential equations}, Quantum \textbf{7} (2023), 913,
  \href{https://arxiv.org/abs/2202.01054}{arXiv:2202.01054}.

\bibitem{KBBP16}
J.~Nathan Kutz, Steven~L. Brunton, Bingni~W. Brunton, and Joshua~L. Proctor,
  \emph{Dynamic mode decomposition: data-driven modeling of complex systems},
  SIAM, 2016.

\bibitem{LandryCaboussatHairer2009}
Chantal Landry, Alexandre Caboussat, and Ernst Hairer, \emph{{Solving
  Optimization-Constrained Differential Equations with Discontinuity Points,
  with Application to Atmospheric Chemistry}}, SIAM J. Sci. Comput. \textbf{31}
  (2009), no.~5, 3806--3826.

\bibitem{Lan10}
Benjamin~P. Lanyon, James~D. Whitfield, Geoff~G. Gillett, Michael~E. Goggin,
  Marcelo~P. Almeida, Ivan Kassal, Jacob~D. Biamonte, Masoud Mohseni, Ben~J.
  Powell, Marco Barbieri, Al¨¢n Aspuru-Guzik, and Andrew~G. White,
  \emph{Towards quantum chemistry on a quantum computer}, Nature Chemistry
  \textbf{2} (2010), no.~2, 106,
  \href{https://arxiv.org/abs/0905.0887}{arXiv:0905.0887}.

\bibitem{LM98}
Andrzej Lasota and Michael~C. Mackey, \emph{Chaos, fractals, and noise:
  stochastic aspects of dynamics}, vol.~97, Springer Science \& Business Media,
  1998.

\bibitem{LEBS23}
Dylan Lewis, Stephan Eidenbenz, Balasubramanya Nadiga, and Yi{\u{g}}it
  Suba{\c{s}}{\i}, \emph{Limitations for quantum algorithms to solve turbulent
  and chaotic systems}, 2023,
  \href{https://arxiv.org/abs/2307.09593}{arXiv:2307.09593}.

\bibitem{LYW23}
Xiangyu Li, Xiaolong Yin, Nathan Wiebe, Jaehun Chun, Gregory~K. Schenter,
  Margaret~S. Cheung, and Johannes M{\"u}lmenst{\"a}dt, \emph{Potential quantum
  advantage for simulation of fluid dynamics}, 2023,
  \href{https://arxiv.org/abs/2303.16550}{arXiv:2303.16550}.

\bibitem{LW23}
Xiantao Li and Chunhao Wang, \emph{Efficient quantum algorithms for quantum
  optimal control}, 2023,
  \href{https://arxiv.org/abs/2304.02613}{arXiv:2304.02613}.

\bibitem{LLA22}
Yen~Ting Lin, Robert~B. Lowrie, Denis Aslangil, Yi{\u{g}}it Suba{\c{s}}{\i},
  and Andrew~T. Sornborger, \emph{{Koopman von {N}eumann mechanics and the
  {K}oopman representation: A perspective on solving nonlinear dynamical
  systems with quantum computers}}, 2022,
  \href{https://arxiv.org/abs/2202.02188}{arXiv:2202.02188}.

\bibitem{AFJ22}
Jin-Peng Liu, Dong An, Di~Fang, Jiasu Wang, Guang~Hao Low, and Stephen Jordan,
  \emph{Efficient quantum algorithm for nonlinear reaction--diffusion equations
  and energy estimation}, Communications in Mathematical Physics \textbf{404}
  (2023), no.~2, 963--1020,
  \href{https://arxiv.org/abs/2205.01141}{arXiv:2205.01141}.

\bibitem{LKK20}
Jin-Peng Liu, Herman~{\O}ie Kolden, Hari~K. Krovi, Nuno~F. Loureiro,
  Konstantina Trivisa, and Andrew~M. Childs, \emph{Efficient quantum algorithm
  for dissipative nonlinear differential equations}, Proceedings of the
  National Academy of Sciences \textbf{118} (2021), no.~35, e2026805118,
  \href{https://arxiv.org/abs/2011.03185}{arXiv:2011.03185}.

\bibitem{LLL23}
Junyu Liu, Minzhao Liu, Jin-Peng Liu, Ziyu Ye, Yunfei Wang, Yuri Alexeev, Jens
  Eisert, and Liang Jiang, \emph{Towards provably efficient quantum algorithms
  for large-scale machine-learning models}, Nature Communications \textbf{15}
  (2024), no.~1, 434,
  \href{https://arxiv.org/abs/2303.03428}{arXiv:2303.03428}.

\bibitem{Llo96}
Seth Lloyd, \emph{Universal quantum simulators}, Science (1996), 1073--1078.

\bibitem{LPG20}
Seth Lloyd, Giacomo De~Palma, Can Gokler, Bobak Kiani, Zi-Wen Liu, Milad
  Marvian, Felix Tennie, and Tim Palmer, \emph{Quantum algorithm for nonlinear
  differential equations}, 2020,
  \href{https://arxiv.org/abs/2011.06571}{arXiv:2011.06571}.

\bibitem{LC17}
Guang~Hao Low and Isaac~L. Chuang, \emph{Optimal {H}amiltonian simulation by
  quantum signal processing}, Physical Review Letters \textbf{118} (2017),
  no.~1, 010501, \href{https://arxiv.org/abs/1606.02685}{arXiv:1606.02685}.

\bibitem{LC16}
Guang~Hao Low and Isaac~L. Chuang, \emph{Hamiltonian simulation by
  qubitization}, Quantum \textbf{3} (2019), 163,
  \href{https://arxiv.org/abs/1610.06546}{arXiv:1610.06546}.

\bibitem{LW18}
Guang~Hao Low and Nathan Wiebe, \emph{Hamiltonian simulation in the interaction
  picture}, 2018, \href{https://arxiv.org/abs/1805.00675}{arXiv:1805.00675}.

\bibitem{mandelshtam1997low}
Vladimir~A Mandelshtam and Howard~S Taylor, \emph{A low-storage filter
  diagonalization method for quantum eigenenergy calculation or for spectral
  analysis of time signals}, The Journal of Chemical Physics \textbf{106}
  (1997), no.~12, 5085--5090.

\bibitem{mandelshtam1997spectral}
Vladimir~A. Mandelshtam and Howard~S. Taylor, \emph{Spectral analysis of time
  correlation function for a dissipative dynamical system using filter
  diagonalization: Application to calculation of unimolecular decay rates},
  Physical review letters \textbf{78} (1997), no.~17, 3274.

\bibitem{MEA18}
Sam McArdle, Suguru Endo, Alan Aspuru-Guzik, Simon~C. Benjamin, and Xiao Yuan,
  \emph{Quantum computational chemistry}, Reviews of Modern Physics \textbf{92}
  (2020), no.~1, 015003,
  \href{https://arxiv.org/abs/1808.10402}{arXiv:1808.10402}.

\bibitem{MUP22}
Nico Meyer, Christian Ufrecht, Maniraman Periyasamy, Daniel~D. Scherer, Axel
  Plinge, and Christopher Mutschler, \emph{A survey on quantum reinforcement
  learning}, 2022, \href{https://arxiv.org/abs/2211.03464}{arXiv:2211.03464}.

\bibitem{Mez05}
Igor Mezi{\'c}, \emph{Spectral properties of dynamical systems, model reduction
  and decompositions}, Nonlinear Dynamics \textbf{41} (2005), 309--325.

\bibitem{palao2008protecting}
Jos{\'e}~P. Palao, Ronnie Kosloff, and Christiane~P. Koch, \emph{Protecting
  coherence in optimal control theory: State-dependent constraint approach},
  Physical Review A \textbf{77} (2008), no.~6, 063412.

\bibitem{PQS11}
David Poulin, Angie Qarry, Rolando~D. Somma, and Frank Verstraete,
  \emph{Quantum simulation of time-dependent {H}amiltonians and the convenient
  illusion of {H}ilbert space}, Physical Review Letters \textbf{106} (2011),
  no.~17, 170501, \href{https://arxiv.org/abs/1102.1360}{arXiv:1102.1360}.

\bibitem{pre19}
John Preskill, \emph{Simulating quantum field theory with a quantum computer},
  The 36th Annual International Symposium on Lattice Field Theory, vol. 334,
  p.~024, SISSA Medialab, 2019,
  \href{https://arxiv.org/abs/1811.10085}{arXiv:1811.10085}.

\bibitem{RF22}
Patrick Rall and Bryce Fuller, \emph{Amplitude estimation from quantum signal
  processing}, Quantum \textbf{7} (2023), 937,
  \href{https://arxiv.org/abs/2207.08628}{arXiv:2207.08628}.

\bibitem{ren2018time}
Jiajun Ren, Zhigang Shuai, and Garnet Kin-Lic~Chan, \emph{Time-dependent
  density matrix renormalization group algorithms for nearly exact absorption
  and fluorescence spectra of molecular aggregates at both zero and finite
  temperature}, Journal of Chemical Theory and Computation \textbf{14} (2018),
  no.~10, 5027--5039,
  \href{https://arxiv.org/abs/1806.07443}{arXiv:1806.07443}.

\bibitem{RWP09}
Robert Roloff, Markus Wenin, and Walter P{\"o}tz, \emph{Optimal control for
  open quantum systems: Qubits and quantum gates}, Journal of Computational and
  Theoretical Nanoscience \textbf{6} (2009), no.~8, 1837--1863,
  \href{https://arxiv.org/abs/0910.0362}{arXiv:0910.0362}.

\bibitem{serban2005optimal}
Ioana Serban, J.~Werschnik, and E.K.U. Gross, \emph{Optimal control of
  time-dependent targets}, Physical Review A \textbf{71} (2005), no.~5, 053810.

\bibitem{SBW21}
Yuan Su, Dominic~W. Berry, Nathan Wiebe, Nicholas Rubin, and Ryan Babbush,
  \emph{Fault-tolerant quantum simulations of chemistry in first quantization},
  PRX Quantum \textbf{2} (2021), no.~4, 040332,
  \href{https://arxiv.org/abs/2105.12767}{arXiv:2105.12767}.

\bibitem{SHC21}
Yuan Su, Hsin-Yuan Huang, and Earl~T Campbell, \emph{Nearly tight
  trotterization of interacting electrons}, Quantum \textbf{5} (2021), 495,
  \href{https://arxiv.org/abs/2012.09194}{arXiv:2012.09194}.

\bibitem{Tre08}
Lloyd~N. Trefethen, \emph{Is {G}auss quadrature better than
  {C}lenshaw--{C}urtis?}, SIAM Review \textbf{50} (2008), no.~1, 67--87.

\bibitem{tussupbayev2015comparison}
Samat Tussupbayev, Niranjan Govind, Kenneth Lopata, and Christopher~J Cramer,
  \emph{Comparison of real-time and linear-response time-dependent density
  functional theories for molecular chromophores ranging from sparse to high
  densities of states}, Journal of Chemical Theory and Computation \textbf{11}
  (2015), no.~3, 1102--1109.

\bibitem{vCGN22}
Joran van Apeldoorn, Arjan Cornelissen, Andr{\'a}s Gily{\'e}n, and Giacomo
  Nannicini, \emph{Quantum tomography using state-preparation unitaries},
  Proceedings of the 2023 Annual ACM-SIAM Symposium on Discrete Algorithms
  (SODA), pp.~1265--1318, SIAM, 2023,
  \href{https://arxiv.org/abs/2207.08800}{arXiv:2207.08800}.

\bibitem{wall1995extraction}
Michael~R. Wall and Daniel Neuhauser, \emph{Extraction, through
  filter-diagonalization, of general quantum eigenvalues or classical normal
  mode frequencies from a small number of residues or a short-time segment of a
  signal. i. theory and application to a quantum-dynamics model}, The Journal
  of Chemical Physics \textbf{102} (1995), no.~20, 8011--8022.

\bibitem{WG07}
J.~Werschnik and E.~K.~U. Gross, \emph{Quantum optimal control theory}, Journal
  of Physics B: Atomic, Molecular and Optical Physics \textbf{40} (2007),
  no.~18, R175, \href{https://arxiv.org/abs/0707.1883}{arXiv:0707.1883}.

\bibitem{Wie96}
Stephen Wiesner, \emph{Simulations of many-body quantum systems by a quantum
  computer}, 1996,
  \href{https://arxiv.org/abs/quant-ph/9603028}{arXiv:quant-ph/9603028}.

\bibitem{XWG21}
Cheng Xue, Yu-Chun Wu, and Guo-Ping Guo, \emph{Quantum homotopy perturbation
  method for nonlinear dissipative ordinary differential equations}, New
  Journal of Physics \textbf{23} (2021), no.~12, 123035,
  \href{https://arxiv.org/abs/2111.07486}{arXiv:2111.07486}.

\bibitem{Zal98}
Christof Zalka, \emph{Efficient simulation of quantum systems by quantum
  computers}, Fortschritte der Physik: Progress of Physics \textbf{46} (1998),
  no.~6-8, 877--879,
  \href{https://arxiv.org/abs/quant-ph/9603026}{arXiv:quant-ph/9603026}.

\bibitem{ZZS21}
Qi~Zhao, You Zhou, Alexander~F/ Shaw, Tongyang Li, and Andrew~M. Childs,
  \emph{Hamiltonian simulation with random inputs}, Physical Review Letters
  \textbf{129} (2022), no.~27, 270502,
  \href{https://arxiv.org/abs/2111.04773}{arXiv:2111.04773}.

\end{thebibliography}

\appendix

\section{Clenshaw--Curtis quadrature}
\label{app:quadrature}

For a continuous function $f(t)$ defined on $[-1,1]$, the Clenshaw--Curtis quadrature formula approximates the integral $I=\int_{-1}^1 f(t) \ud t=\int_{0}^{\pi} f(\cos\theta) \sin\theta \ud \theta$ by expanding $f(\cos\theta)$ into a polynomial of $\cos\theta$. This amounts to the following quadrature formula
\begin{equation}
I_n=\sum_{k=0}^M \omega_k f(t_k),
\end{equation}
where $t_k$ are chosen to be the Chebyshev points $t_k=\cos \frac{k\pi}{M}, k=0,\ldots,M$. Assuming $M$ is an even number, the weights are
\begin{equation}
\omega_k=\frac{(2-\delta_{k,0}-\delta_{k,M})}{M} \sum_{l=0}^{M/2}(2-\delta_{l,0}-\delta_{l,M/2})\frac{T_{2l}(t_k)}{1-4l^2}, \quad k=0,\ldots,M.
\end{equation} 
Here $T_{l}(x)$ is the $l$-th order Chebyshev polynomial, and $\delta$ is the Kronecker delta. We may write $T_{2l}(t_k)=\cos(2lk\pi/M)$, and the quadrature weights $\{\omega_k\}$ are positive~\cite{Imhof1963}.   
When $M$ is large, the weights $\omega_k$ can be efficiently carried out using fast Fourier transform (FFT) (see e.g.,~\cite{Tre08}). We may use other efficient quadrature schemes, such as the Gauss-Legendre quadrature. However, we find that it is simpler to estimate the $2$-norm of the weights $\{\omega_k\}$ for the Clenshaw--Curtis quadrature needed for the tail bound: Using the fact that $\abs{T_{2l}(t_k)}\le 1$, we have
\begin{equation}
\sum_{k=0}^M \omega_{k}^2\le  \sum_{k=0}^M \frac{16}{M^2}\left(1+\sum_{l=1}^{M/2} \frac{1}{(2l)^2-1}\right)^2
\le  \frac{16}{M^2}\sum_{k=0}^M \left(1+\sum_{l=1}^{\infty} \frac{1}{(2l)^2-1}\right)^2=\frac{36(M+1)}{M^2}=\Or(M^{-1}).
\end{equation}

When $T=\Or(1)$, 
we may map the interval $[0,T]$ to $[-1,1]$ via a linear transformation. For simplicity assume that $\braket{O}_t$ is an analytic function in an open region including the interval $[0,T]$ on which $\abs{\braket{O}_t}\le C$ for some constant $C$.  Then the error of Clenshaw--Curtis quadrature decreases exponentially in $M$ (\cite[Theorem 4.5]{Tre08}). In other words, to achieve additive error $\epsilon$, number of quadrature points is $\Or(\log(1/\epsilon))$. 

For long time integration, mapping the interval $[0,T]$ to $[-1,1]$ introduces additional $T$-dependence in the magnitude of $\abs{\braket{O}_t}$. As a result, we may use a composite Clenshaw--Curtis quadrature, which  
divides the time interval $[0,T]$ into $I$ intervals as $0 = t_0 < t_1 < \ldots < t_{I} = T$ with time step with $\Delta t = t_{i+1} - t_i = \Theta(1)$. Within each segment, we use a Clenshaw--Curtis quadrature with nodes $\{t_{i,0},\ldots,t_{i,M}\}$ and weights $\{\omega_{i,0},\ldots,\omega_{i,M}\}$. Putting all the segments together, we can approximate $A$ as
\begin{equation}
\wt{A}=\sum_{i=1}^{I} \sum_{m=0}^{M} \omega_{i,m} \braket{O}_{t_{i,m}}
\end{equation}
with $\abs{A-\wt{A}}\le \epsilon/2$. With some abuse of notation, we reorder $\wt{A}$ as
\begin{equation}
\wt{A}=\sum_{k=1}^{N_t} \omega_{k} \braket{O}_{t_{k}}
\end{equation}
where $N_t$ is the total number of nodes, with $N_t = (M+1)I = \Or(T\log(1/\epsilon))$. The $2$-norm of the weight satisfies
\begin{equation}
\sum_{k=1}^{N_t}\omega_k^2=\Or(T/M)=\Or(T/\log(1/\epsilon)).
\end{equation}

\section{Quantum time-dependent ODE solver based on Dyson series}
\label{app:Dyson}

We state the complexity of the quantum time-dependent ODE solver developed by Berry and Costa~\cite{BC22} as below.

\begin{lemma}[Theorem 1 of~\cite{BC22}]\label{lem:tdode}
Given an ODE of the form
\begin{equation}
\frac{\d }{\d t}x(t) = A(t) x(t) + b(t), \quad x(0)=x_{\mathrm{in}},
\end{equation}
where $b(t)\in\C^n$ is a vector function of $t$, $A(t)\in\C^{n\times n}$ is a coefficient matrix with non-positive logarithmic norm, and $x(t)\in\C^n$ is the solution vector as a function of $t$. The parameters of the differential equation are provided via $U_A$, $U_b$ and $U_x$ such that 
\begin{equation}
\bra{0}U_A\ket{0} = \frac{1}{\lambda_A}A(t), \quad U_b\ket{0} = \frac{1}{\lambda_b}\ket{b(t)}, \quad U_x\ket{0} = \frac{1}{\lambda_x}\ket{x_{\mathrm{in}}}.
\end{equation}
A quantum algorithm can provide an approximation $\ket{\hat x}$ of the solution $\ket{x(T)}$ satisfying $\|\ket{\hat x}-\ket{x(T)}\| \le \epsilon x_{\max}$ using an average number
\begin{equation}
\Or\Bigl(\mathcal{R} \lambda T \log(1/\epsilon)\Bigr)
\end{equation}
calls to $U_b$, $U_x$, 
\begin{equation}
\Or\Bigl(\mathcal{R} \lambda T \log(1/\epsilon) \log(\lambda T/\epsilon)\Bigr)
\end{equation}
calls to $U_A$, and
\begin{equation}
\Or\Bigl(\mathcal{R} \lambda T \log(1/\epsilon) \log(\lambda T/\epsilon) \bigl[\log(T\mathcal{D}/\lambda\epsilon) + \log(\lambda T/\epsilon)\bigr]\Bigr)
\end{equation}
additional gates, 
where $\lambda = \max\{\lambda_A, \lambda_b/x_{\max}\}$, given constants satisfying 
\begin{equation}
\begin{aligned}
    \mathcal{R} &\ge \frac{x_{\max}}{\|x(T)\|} \frac{\lambda_b/\lambda}{\min_m\|v(m\Delta t, (m-1)\Delta t)\| - \epsilon x_{\max}/(\lambda T)}, \\
    \mathcal{D} &\ge \max_{t\in[0,T]}\|A'(t)\| + \frac{\max_{t\in[0,T]}\|b'(t)\|}{x_{\max}},\\
    x_{\max} &\ge \max_{t\in[0,T]}\|x(t)\|, \\
    b_{\max} &\ge \max_{t\in[0,T]}\|b(t)\|.
\end{aligned} 
\end{equation}
Here 
\begin{equation}
v(t, t_0) = \sum_{k=0}^{\infty} \int_{t_0}^{t} \d t_1 \int_{t_0}^{t_1} \d t_2 \cdots \int_{t_0}^{t_{k-1}} \d t_k A(t_1)A(t_2) \cdots A(t_{k-1})b(t_k),
\end{equation}
and
\begin{equation}
\Delta t = \frac{T}{\lceil \frac{T}{\min(\frac{1}{2\lambda_A},\frac{x_{\max}}{b_{\max}})} \rceil}.
\end{equation}
\end{lemma}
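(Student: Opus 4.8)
The plan is to reconstruct the time-marching, truncated–Dyson-series strategy: represent the solution by Duhamel's formula, chop $[0,T]$ into $\Or(\lambda T)$ short intervals on which the time-ordered propagator is a mild perturbation of the identity, block-encode each short-time propagator together with its inhomogeneous Duhamel contribution by a truncated Dyson series implemented through a linear combination of unitaries (in the spirit of the interaction-picture simulation of~\cite{LW18}), assemble the whole evolution into one block-encoded linear system (a history-state formulation à la~\cite{BCOW17,CL19}), solve it with a quantum linear-system routine, and finally recover the normalized final state by amplitude amplification~\cite{BHM02}. The hypothesis that $A(t)$ has non-positive logarithmic norm is the structural fact that makes everything work: it forces every propagator $\mathcal{T}\exp(\int_s^t A)$ to be a contraction of norm $\le 1$, so the solution never grows, the truncated Dyson expansions have $\Or(1)$ LCU normalization, and no rescaling is needed.

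First I would fix $\Delta t\lesssim 1/\lambda_A$ (consistent with the stated $\Delta t = T/\lceil T/\min(1/(2\lambda_A),x_{\max}/b_{\max})\rceil$) so that each short step obeys $\|A\|\Delta t = \Or(1)$, giving $r=\Or(\lambda T)$ steps. On the $m$-th step I expand the time-ordered exponential as a Dyson series $\sum_{k\ge 0}\int\cdots\int A(\tau_1)\cdots A(\tau_k)\,d\tau_1\cdots d\tau_k$ and truncate at order $K$; since the $k$-th term is bounded by $(\|A\|\Delta t)^k/k!$, taking $K=\Or(\log(\lambda T/\epsilon)/\log\log(\lambda T/\epsilon))$ controls the truncation error per step by $\Or(\epsilon/r)$. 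Each $k$-fold ordered integral is then replaced by a quadrature sum over a grid of $M$ nodes; the discretization error is governed by the derivative bound $\mathcal{D}\ge\max_t\|A'(t)\|+\max_t\|b'(t)\|/x_{\max}$, and choosing $M=\Or(T\mathcal{D}/(\lambda\epsilon))$ keeps it below tolerance, which is the origin of the $\log(T\mathcal{D}/\lambda\epsilon)$ gate factor. The resulting finite sum of ordered products of $A$, together with the Duhamel term built from $U_b$, is implemented as an LCU of controlled $U_A$ applications whose coefficients are the quadrature weights times the $1/k!$ Dyson factors; their $1$-norm is dominated by $e^{\|A\|\Delta t}=\Or(1)$, so the per-step subnormalization stays $\Or(1)$. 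This is where the extra $\log(\lambda T/\epsilon)$ factor on the $U_A$ count (relative to $U_b,U_x$) arises: each term uses up to $K$ copies of $U_A$ but only $\Or(1)$ copies of the state-preparation oracles.

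Next I would assemble the global map as a single block lower-bidiagonal linear system $L\ket{\Psi}=\ket{B}$ whose solution is the history state $\sum_m\ket{m}\ket{x(t_m)}$. Each diagonal block is the identity and each subdiagonal block is minus the $\Or(1)$-normalized short-step propagator above, so $L^{-1}$ is a lower-triangular sum of products of contractions; the non-positive log-norm bound keeps these products $\le 1$ in norm, which I would use to show the condition number is $\Or(\lambda T)$ up to logarithmic factors. Solving this system then costs $\Or(\lambda T\,\polylog(1/\epsilon))$ queries to the block-encoding of $L$ (hence to $U_A$) and to $\ket{B}$ (hence to $U_b,U_x$), the $\log(1/\epsilon)$ being the precision of the inversion step. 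Reading off the final slice and renormalizing to $\ket{x(T)}$ requires amplitude amplification whose success amplitude is $\|x(T)\|$ relative to the full history-state norm $\lesssim x_{\max}$, modulated by the inhomogeneous weight $\lambda_b/\lambda$ and the per-step Duhamel norms — exactly the ratio bounded by $\mathcal{R}$. Thus $\Or(\mathcal{R})$ amplification rounds give the overall $\Or(\mathcal{R}\lambda T\log(1/\epsilon))$ calls to $U_b,U_x$, the order-$K$ Dyson products contribute the additional $\log(\lambda T/\epsilon)$ on the $U_A$ count, and the quadrature-node preparation and index arithmetic contribute the extra $\log(T\mathcal{D}/\lambda\epsilon)+\log(\lambda T/\epsilon)$ gate factors.

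Finally I would close the error budget by setting the per-step truncation and discretization tolerances to $\Or(\epsilon/r)$ so they sum to the target, using $x_{\max}\ge\max_t\|x(t)\|$ and the contraction property to convert local errors into the global relative bound $\|\ket{\hat{x}}-\ket{x(T)}\|\le\epsilon x_{\max}$. The hard part will be the amplification analysis: I must show the combined homogeneous-plus-inhomogeneous block-encoding has success amplitude no smaller than $\Theta(1/\mathcal{R})$, i.e.\ that the inhomogeneity at some intermediate step does not destructively cancel the target — this is precisely why the hypothesis lower-bounds $\min_m\|v(m\Delta t,(m-1)\Delta t)\|$ (the one-step, zero-initial-data Duhamel solution) and subtracts the slack $\epsilon x_{\max}/(\lambda T)$. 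Controlling this lower bound simultaneously with keeping each LCU $1$-norm at $\Or(1)$, so that the product over the $r$ steps introduces no spurious exponential normalization, is the crux of the argument, and it is the reason the statement packages $\mathcal{R}$, $\mathcal{D}$, $x_{\max}$, and $b_{\max}$ together.
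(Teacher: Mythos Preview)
This statement is not proved in the paper at all: it is quoted verbatim as Theorem~1 of~\cite{BC22} and used as a black-box input (see Appendix~B, where \cref{lem:tdode} and \cref{lem:tiode} are simply stated without any argument). There is therefore no ``paper's own proof'' against which to compare your proposal.

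That said, your reconstruction is a reasonable high-level sketch of the Berry--Costa argument: the time-marching with short steps of length $\Or(1/\lambda)$, the truncated Dyson expansion of order $K=\Or(\log(\lambda T/\epsilon))$ per step, the LCU implementation with $\Or(1)$ subnormalization guaranteed by the non-positive log-norm (contraction of every propagator), the assembly into a block lower-bidiagonal linear system solved by QLSA, and amplitude amplification with $\Or(\mathcal{R})$ rounds to extract the final slice. The attribution of the various logarithmic factors (Dyson order $\to\log(\lambda T/\epsilon)$ on $U_A$; quadrature resolution $\to\log(T\mathcal{D}/\lambda\epsilon)$ in gates; linear-system precision $\to\log(1/\epsilon)$) matches the structure of the stated bounds. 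If you intend to actually write this out, the one place to be careful is the success-amplitude lower bound: the quantity $\min_m\|v(m\Delta t,(m-1)\Delta t)\|$ enters because the inhomogeneous Duhamel pieces are prepared via $U_b$ with weight $\lambda_b/\lambda$, and you must show that after the QLSA the component on the final time slice is at least $\Theta(1/\mathcal{R})$ of the full history-state norm; this requires tracking how the per-step inhomogeneous contributions propagate forward under the (contractive) homogeneous flow and do not get swamped by the accumulated error $\epsilon x_{\max}/(\lambda T)$ per step. Your last paragraph correctly flags this as the crux.
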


When $A$ is time-independent, the complexity can be simplified as below.

\begin{lemma}[Theorem 2 of~\cite{BC22}]\label{lem:tiode}
Given an ODE of the form
\begin{equation}
\frac{\d }{\d t}x(t) = A x(t) + b, \quad x(0)=x_{\mathrm{in}},
\end{equation}
where $b\in\C^n$ is a vector function of $t$, $A\in\C^{n\times n}$ is a coefficient matrix with non-positive logarithmic norm, and $x(t)\in\C^n$ is the solution vector as a function of $t$. The parameters of the differential equation are provided via $U_A$, $U_b$ and $U_x$ such that 
\begin{equation}
\bra{0}U_A\ket{0} = \frac{1}{\lambda_A}A, \quad U_b\ket{0} = \frac{1}{\lambda_b}\ket{b}, \quad U_x\ket{0} = \frac{1}{\lambda_x}\ket{x_{\mathrm{in}}}.
\end{equation}
A quantum algorithm can provide an approximation $\ket{\hat x}$ of the solution $\ket{x(T)}$ satisfying $\|\ket{\hat x}-\ket{x(T)}\| \le \epsilon x_{\max}$ using an average number
\begin{equation}
\Or\Bigl(\mathcal{R} \lambda T \log(1/\epsilon)\Bigr)
\end{equation}
calls to $U_b$, $U_x$, 
\begin{equation}
\Or\Bigl(\mathcal{R} \lambda T \log(1/\epsilon) \log(\lambda T/\epsilon)\Bigr)
\end{equation}
calls to $U_A$, and
\begin{equation}
\Or\Bigl(\mathcal{R} \lambda T \log(1/\epsilon) \log^2(\lambda T/\epsilon) \Bigr)
\end{equation}
additional gates, 
where $\lambda = \max\{\lambda_A, \lambda_b/x_{\max}\}$, given constants satisfying 
\begin{equation}
\begin{aligned}
    \mathcal{R} &\ge \frac{x_{\max}}{\|x(T)\|} , \\
    x_{\max} &\ge \max_{t\in[0,T]}\|x(t)\|.
\end{aligned} 
\end{equation}
\end{lemma}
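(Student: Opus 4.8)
The plan is to obtain \lem{tiode} as a direct specialization of the time-dependent solver in \lem{tdode}, setting $A(t)\equiv A$ and $b(t)\equiv b$. First I would check that all hypotheses transfer verbatim: $A$ still has non-positive logarithmic norm, the block encodings $U_A,U_b,U_x$ are exactly the constant-coefficient instances of the oracles in \lem{tdode}, and $\lambda=\max\{\lambda_A,\lambda_b/x_{\max}\}$ together with $x_{\max}\ge\max_{t}\norm{x(t)}$ retain their meaning. Under this substitution the three resource counts (calls to $U_b,U_x$; calls to $U_A$; and additional gates) inherit the same leading factor $\Or(\mathcal{R}\lambda T\log(1/\epsilon))$, so the only real work is to show that the two places where the time-dependent statement is genuinely more complicated — the regularity parameter $\mathcal{D}$ and the renormalization constant $\mathcal{R}$ — both collapse in the constant-coefficient setting.

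For the gate count the simplification is immediate once I observe that constant coefficients give $A'(t)\equiv 0$ and $b'(t)\equiv 0$, so the derivative bound
\[
\mathcal{D}\ge\max_{t\in[0,T]}\norm{A'(t)}+\frac{\max_{t\in[0,T]}\norm{b'(t)}}{x_{\max}}
\]
is met by $\mathcal{D}=0$. The point is that the bracketed log factor $[\log(T\mathcal{D}/\lambda\epsilon)+\log(\lambda T/\epsilon)]$ in the time-dependent gate count loses its $\mathcal{D}$-dependence: the Dyson-series truncation on each step no longer needs a correction term accounting for the variation of the coefficients, so the per-step error analysis only keeps the $\log(\lambda T/\epsilon)$ contribution coming from distributing the target error $\epsilon$ across the $\Or(\lambda T)$ steps. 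This turns $\log(\lambda T/\epsilon)\cdot[\cdots]$ into $\log^2(\lambda T/\epsilon)$, matching the gate count asserted in \lem{tiode}, while the $U_A,U_b,U_x$ counts are structurally unchanged.

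The hard part is the claimed reduction of the renormalization constant from
\[
\mathcal{R}\ge\frac{x_{\max}}{\norm{x(T)}}\cdot\frac{\lambda_b/\lambda}{\min_m\norm{v(m\Delta t,(m-1)\Delta t)}-\epsilon x_{\max}/(\lambda T)}
\]
to simply $\mathcal{R}\ge x_{\max}/\norm{x(T)}$; this is not a formal consequence of $\mathcal{D}=0$ and requires estimating the per-step inhomogeneous contribution $v$ inside the time-marching encoding. Here I would exploit uniformity: with constant $A,b$ every step-propagator equals $e^{A\Delta t}$ and every inhomogeneous increment equals $v=\int_0^{\Delta t}e^{As}b\,\d s$, so $\min_m\norm{v(m\Delta t,(m-1)\Delta t)}$ reduces to a single quantity with $\norm{v}=\Theta(\norm{b}\Delta t)=\Theta(\lambda_b\Delta t)$ whenever $\norm{A}\Delta t=\Or(1)$. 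Combining this with the step size $\Delta t=\Theta(\min\{1/\lambda_A,\;x_{\max}/b_{\max}\})$ and the normalization $\lambda\ge\lambda_b/x_{\max}$, I would check that $\lambda\Delta t\ge\Omega(1)$ in both regimes of the $\min$, so that the extra factor $(\lambda_b/\lambda)/\norm{v}=\Or(1/(\lambda\Delta t))=\Or(1)$ and the subtracted term $\epsilon x_{\max}/(\lambda T)$ is negligible against $\norm{v}$ at the target accuracy. This collapses the renormalization factor to the condition-number-like ratio $x_{\max}/\norm{x(T)}$, completing the specialization. I expect this $\norm{v}$ lower bound — which rests on the internal structure of the Dyson-series time-marching scheme rather than on the black-box statement of \lem{tdode} — to be the principal technical obstacle.
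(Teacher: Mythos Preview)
The paper does not prove \lem{tiode} at all: it is quoted verbatim as Theorem~2 of~\cite{BC22}, just as \lem{tdode} is quoted as Theorem~1 of the same reference. The only connecting text is the sentence ``When $A$ is time-independent, the complexity can be simplified as below,'' after which the time-independent result is stated without argument. So there is nothing to compare against; your derivation is an attempt to reconstruct, from the two black-box statements, why the second follows from the first.

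As a reconstruction, your outline is reasonable but you should be aware that it cannot be made to work purely at the level of the \emph{statements} of \lem{tdode} and \lem{tiode}. You already flag this for $\mathcal{R}$, correctly noting that collapsing the extra factor $(\lambda_b/\lambda)/(\min_m\norm{v}-\cdots)$ to $\Or(1)$ requires opening up the time-marching construction in~\cite{BC22} rather than treating \lem{tdode} as a black box. The same is true for the gate count: setting $\mathcal{D}=0$ does not literally make $\log(T\mathcal{D}/\lambda\epsilon)$ vanish (it sends it to $-\infty$), so the disappearance of that term has to be argued at the level of the underlying error analysis, where the $\mathcal{D}$-dependent contribution bounds a discretization error that is simply absent when $A,b$ are constant. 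In short, both simplifications are established in~\cite{BC22} by rerunning the proof in the time-independent case, not by specializing the final bound; your proposal is a plausible sketch of how that rerun goes, but the paper itself defers entirely to the cited reference.
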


Without loss of generality, the above quantum algorithm for time-independent ODEs takes query and gate complexity
\begin{equation}
\Or\Bigl(g\cdot\max\{\|H\|,\lambda_b/x_{\max}\}\cdot T\cdot\polylog(1/\epsilon)\Bigr),
\end{equation}
where
\begin{equation}
g = \frac{x_{\max}}{\|x(T)\|}.
\end{equation}
For time-dependent ODEs, the query and gate complexity includes additional factors from Dyson series as described above.

\section{Quantum ODE solver for quantum-driven classical dynamics}
\label{app:QDCD}

\begin{lemma}[Quantum Carleman linearization algorithm]\label{lem:kis}
For \prb{qdcd}, we consider an instance of \eq{QDCD} with its Carleman linearization as defined in \eq{problem-kis}. We assume $H$ and $O$ (and hence $P$) are time-independent.
There exists a quantum algorithm producing a quantum state proportional to $[\psi(T);J(T)]$ with error at most $\epsilon\le1$, succeeding with probability $\Omega(1)$, with a flag indicating success, with
\begin{equation}
\wt O\Bigl(\|P\|(\|H\|+\|P\|)T\Gamma\cdot\polylog(1/\epsilon)\Bigr)
\end{equation}
queries to the matrix oracle for $H$, $P$ and the state preparation oracle for $\ket{\psi_{\mathrm{in}}}$, where
\begin{equation}
\Gamma = \frac{|J(T)|^2+1}{|J(T)|}
\end{equation}
as denoted in \eq{Gamma}.
\end{lemma}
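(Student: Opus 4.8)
The plan is to bound the condition number and right-hand-side normalization of the padded linear system $\wh L\ket{\wh\Psi}=\ket{\wh B}$ coming from the Carleman linearization \eqref{eq:problem-kis}, and then invoke the time-independent quantum linear ODE solver of Berry--Costa (\lem{tiode}) together with the generic complexity bound $\Or(g\cdot\max\{\|H\|,\lambda_b/x_{\max}\}\cdot T\cdot\polylog(1/\epsilon))$ recorded at the end of \app{Dyson}. First I would observe that the linearized dynamics \eqref{eq:problem-kis} is \emph{exact} — no Carleman truncation error — because the quadratic observable $\ket{\psi}\ket{\psi^\ast}$ evolves linearly under $(-iH)\otimes I + I\otimes(-iH^\ast)$, and $J$ couples to it linearly through $P$; this is precisely the KIS closure property emphasized in \sec{linearization}. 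Hence the only errors are the ODE-solver error and the quadrature discretization $N_t=\Or(T\log(1/\epsilon))$ used to build the history state.

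The key quantitative step is estimating the amplification factor $g = x_{\max}/\|x(T)\|$ for the padded system. Here $x_{\max} = \max_{t}\|[\,J(t);\ket{\psi(t)}\ket{\psi^\ast(t)}\,]\| = \Or(\sqrt{|J(T)|^2+1})$ using that $\|\ket{\psi(t)}\ket{\psi^\ast(t)}\|=1$, that $J(t)$ is non-decreasing and positive so $|J(t)|\le|J(T)|$. For the padded history state \eqref{eq:history_state_padding} the ``final'' block is repeated $N_t+1$ times, so $\|x(T)\|$ — in the sense of the relevant norm appearing in $\mathcal R$ for the blocked system — picks up a $\sqrt{N_t+1}$ enhancement against the value $\sqrt{|J(T)|^2+1}$, while $x_{\max}$ over the whole $(2N_t+2)$-block vector is $\Or(\sqrt{(N_t+1)(|J(T)|^2+1)})$. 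One must check carefully that the padding construction $\wh L$ does indeed reproduce $N_t+1$ copies of $[J(T);\psi(T)\psi^\ast(T)]$ as a stationary tail (identity propagation on those blocks), so that $g=\Or(\sqrt{|J(T)|^2+1}/\sqrt{|J(T)|^2+1}\cdot\text{something})$; tracking the $|J(T)|$-dependence is what produces the factor $\Gamma=(|J(T)|^2+1)/|J(T)|$, via $x_{\max}$ scaling like $\sqrt{|J(T)|^2+1}$ and the success-probability/normalization of the block of interest scaling like $\sqrt{|J(T)|}$-type quantities. Then $\max\{\|H\|,\lambda_b/x_{\max}\}$: the inhomogeneity is $[\frac{\mu}{2}u^2;0]$, and $\|P(t)\|=\|O\|_{\mathrm{HS}}=\Or(1)$ controls the coupling block, so the effective coefficient-matrix norm is $\Or(\|H\|+\|P\|)$ and $\lambda_b/x_{\max}=\Or(1)$ under the standing assumption $J(t)=\Omega(1)$. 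Combining, \lem{tiode} yields $\wt\Or(\|P\|(\|H\|+\|P\|)T\Gamma\cdot\polylog(1/\epsilon))$ queries, which is the claimed bound, with $\polylog(1/\epsilon)$ absorbing both the ODE-solver logs and $N_t$.

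I expect the main obstacle to be the bookkeeping of norms in the padded, blocked linear system: specifically, pinning down how $\mathcal R$ (equivalently $g$) depends on $|J(T)|$ and on $N_t$ once the final state is replicated $N_t+1$ times, and verifying that the logarithmic-norm condition on the coefficient matrix of \eqref{eq:problem-kis} is met (the $(-iH)\otimes I + I\otimes(-iH^\ast)$ block is skew-Hermitian, hence has zero logarithmic norm, and the $J$-row is nilpotent-like, so the combined block should be fine, but this needs a clean argument). A secondary subtlety is that $\lambda_b/x_{\max}$ must be $\Or(1)$: this is where the assumption $J(t)=\Omega(1)$ is used, since otherwise the inhomogeneity $\frac{\mu}{2}u^2$ relative to $x_{\max}$ could blow up. Once these are settled, the reduction to \lem{tiode} is routine, and the final simplification $\|P\|=\Or(1)$ gives the stated $\Or(\|H\|T\Gamma\log(1/\delta)/\epsilon)$ after the amplitude-estimation step of \thm{node-ae} multiplies in the $1/\epsilon$ and $\log(1/\delta)$ factors.
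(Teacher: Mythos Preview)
Your plan has a genuine gap at the step where you invoke \lem{tiode}: that result requires the coefficient matrix to have non-positive logarithmic norm, and the matrix
\[
A=\begin{pmatrix} 0 & P \\ 0 & Q \end{pmatrix}, \qquad Q=(-iH)\otimes I + I\otimes(-iH^*),
\]
does \emph{not} satisfy this. Since $Q$ is skew-Hermitian,
\[
\frac{A+A^*}{2}=\begin{pmatrix} 0 & P/2 \\ P^*/2 & 0 \end{pmatrix},
\]
whose largest eigenvalue is $\|P\|/2>0$. So the hypothesis of \lem{tiode} fails, and your ``the $J$-row is nilpotent-like, so the combined block should be fine'' is exactly where the argument breaks. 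The off-diagonal coupling $P$ is what makes the Hermitian part indefinite; being nilpotent in the block-triangular sense is irrelevant to the logarithmic norm.

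The paper circumvents this by using a different ODE solver, \lem{lnode} (Krovi), whose cost depends on $C(A)=\sup_{t\in[0,T]}\|e^{At}\|$ rather than on a log-norm hypothesis. The key computation is an explicit block diagonalization: after shifting $H$ so that $\|H^{-1}\|\le 1$ (hence $Q$ is invertible), one writes
\[
e^{At}=\begin{pmatrix} I & P(e^{Qt}-I)Q^{-1} \\ 0 & e^{Qt} \end{pmatrix},
\]
and since $\|e^{Qt}\|=1$ this gives $C(A)\le 1+2\|P\|\|Q^{-1}\|=\Or(\|P\|)$. Plugging $C(A)=\Or(\|P\|)$, $\|A\|=\Or(\|H\|+\|P\|)$, and $g=1$ (non-decreasing $J$) into \lem{lnode} yields the stated query bound. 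Your bookkeeping of $g$ through the padded history state is more tangled than necessary; the $\Gamma$ dependence enters at the amplitude-estimation stage in \thm{node-ae}, not through the ODE-solver's $g$ factor, which is simply $1$ here.
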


We consider a quantum linear ODE solver for \eq{problem-kis}
\begin{equation}
\frac{\d }{\d t}   
  \begin{bmatrix}
    J \\
    \ket{\psi}\ket{\psi^\ast} \\
  \end{bmatrix} 
=
  \begin{bmatrix}
    0 & P \\
    0 & Q \\
  \end{bmatrix}  
  \begin{bmatrix}
    J \\
    \ket{\psi}\ket{\psi^\ast} \\
  \end{bmatrix}
+  
  \begin{bmatrix}
    \frac{\mu}{2}u^2 \\
    0 \\
  \end{bmatrix}.
\end{equation}
We denote a skew-Hermitian $Q = (-iH)\otimes I + I\otimes (-iH^*)$ for simplicity, and denote $A = [0, P; 0, Q]$.

Without loss of generality, we are able to shift $H$ so that the eigenvalues of $H$ are lower bounded by $1$. Henceforth, $H$ (and hence $Q$) is invertible, and $\norm{H^{-1}}\le 1$. We also consider a non-decreasing $J(t)$ in terms of $t$ given semi-positive definite $O(t)$, such that $g = \max_{t\in[0,T]}\frac{|J(t)|}{|J(T)|} = 1$.

We observe the diagonalization
\begin{equation}
A=V\Lambda V^{-1} =
  \begin{bmatrix}
    0 & P \\
    0 & Q \\
  \end{bmatrix} 
=
  \begin{bmatrix}
    I & P \\
    0 & Q \\
  \end{bmatrix} 
  \begin{bmatrix}
    0 & 0 \\
    0 & Q \\
  \end{bmatrix} 
  \begin{bmatrix}
    I & -PQ^{-1} \\
    0 & Q^{-1} \\
  \end{bmatrix},
\end{equation}
then the matrix exponential of $A$ has the form
\begin{equation}
e^{At}=Ve^{\Lambda t}V^{-1}=
  \begin{bmatrix}
    I & P \\
    0 & Q \\
  \end{bmatrix} 
  \begin{bmatrix}
    I & 0 \\
    0 & e^{Qt} \\
  \end{bmatrix} 
  \begin{bmatrix}
    I & -PQ^{-1} \\
    0 & Q^{-1} \\
  \end{bmatrix}
=
  \begin{bmatrix}
    I & P(e^{Qt}-I)Q^{-1} \\
    0 & e^{Qt} \\
  \end{bmatrix}.
\end{equation}
Since $\|e^{Qt}\|=1$, we have    
\begin{equation}
\max_{t\in[0,T]}\|e^{At}\| \le 1+2\|P\|\|Q^{-1}\| = O(\|P\|).
\end{equation}
When $P$ and $H$ (and hence $Q$) are time-dependent, it is technically difficult to explicitly upper bound the time-ordering exponential 
\begin{equation}
\max_{t\in[0,T]}\|\mathcal{T}e^{\int_0^tA(s) \d s}\| 
\end{equation}
so we only consider the time-independent case.

For the above ODE with a positive norm of the matrix exponential, we employ the quantum algorithm for the time-independent linear ODEs in~\cite{Kro22}. 

\begin{lemma}[Theorem 7 of~\cite{Kro22}]\label{lem:lnode}
Given an ODE of the form
\begin{equation}
\frac{\d }{\d t}x(t) = A x(t) + b, \quad x(0)=x_{\mathrm{in}},
\end{equation}
and define
\begin{equation}
g \coloneqq \frac{\max_{t\in[0,T]}\|x(t)\|}{\|x(T)\|}, \quad C(A) \coloneqq \sup_{t\in[0,T]} \|\exp(At)\|.
\end{equation}
There exists a quantum algorithm that produces a quantum state $\epsilon$-close to the normalized solution with 
\begin{equation}
\Or\Bigl(gT\|A\|C(A)\cdot \polylog(1/\epsilon)\Bigr),
\end{equation}
queries to the oracles for $A$ and $b$, and gate complexity  is greater by polynomial factors.
\end{lemma}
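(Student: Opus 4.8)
The plan is to reduce the ODE to a large but sparse system of linear equations, solve it with a near-optimal quantum linear systems algorithm (QLSA), and control the final cost through a careful bound on the condition number of that system expressed in terms of $C(A)$. This is the route of Berry--Childs--Ostrander--Wang, sharpened by Krovi's improved condition-number and solution-norm estimates.

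First I would partition $[0,T]$ into $m$ equal steps of length $h=T/m$ and, on each step, encode the truncated Taylor propagator $\sum_{i=0}^{k}(Ah)^i/i!$ together with the inhomogeneous contribution of $b$. Introducing one auxiliary register per Taylor order $i\in\{0,\dots,k\}$ and per time node $j\in\{0,\dots,m\}$, the whole trajectory becomes the solution of a block lower-triangular system $L\ket{\Psi}=\ket{B}$, where $L$ is $(m+1)(k+1)$-block structured, sparse, and block-encodable from the oracles for $A$ and $b$. To read off $x(T)$ with good success probability I would append $p=\Or(m)$ idling blocks enforcing $x_{j+1}=x_j$ past the final node, so that after normalization the history state carries $\Omega(1/(g^2+1))$ of its weight on the final-time register.

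The quantitative steps are then: (i) take the Taylor order $k=\Or(\log(1/\epsilon)/\log\log(1/\epsilon))$ so that the per-step truncation error is below $\epsilon$; (ii) fix $h$ so that $h\norm{A}=\Or(1)$, giving $m=\Or(T\norm{A})$ steps; (iii) bound the solution norm $\norm{\ket{\Psi}}$ and the final-register weight, which is where the decay factor $g$ enters; and (iv) bound the condition number $\kappa(L)=\norm{L}\,\norm{L^{-1}}$. With $\kappa$ in hand I would apply a QLSA whose query complexity is linear in $\kappa$ (for instance the discrete-adiabatic solver) together with amplitude amplification on the final-time register, and multiply the per-solve cost by the amplification factor to reach $\Or(g\,T\norm{A}\,C(A)\cdot\polylog(1/\epsilon))$.

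The hard part will be step (iv), and specifically the upper bound on $\norm{L^{-1}}$. Writing out $L^{-1}$ shows that its blocks are products of the discrete one-step propagators, so $\norm{L^{-1}}$ is governed by partial sums of $\norm{e^{At_j}}$ over the grid; the crucial point is that these are all dominated by $C(A)=\sup_{t\in[0,T]}\norm{\exp(At)}$ rather than by eigenvalue gaps or by the pessimistic $e^{\norm{A}T}$. Showing that the accumulated propagator norms collapse into an $\Or(m\,C(A))$ bound, and that this bound survives both the Taylor truncation and the idling padding, is the technical core; combined with $\norm{L}=\Or(1)$ it gives $\kappa=\Or(m\,C(A))=\Or(T\norm{A}\,C(A))$. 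The QLSA then costs $\Or(\kappa\cdot\polylog(1/\epsilon))$ per solve, and the amplitude amplification needed to post-select the final-time register, whose success probability is $\Omega(1/(g^2+1))$, contributes a factor $\Or(g)$, yielding the stated complexity. The $\polylog(1/\epsilon)$ factor simply collects the Taylor order, the QLSA precision, and the spectral-amplification overhead, none of which affect the leading scaling.
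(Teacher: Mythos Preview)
The paper does not prove this lemma at all: it is stated as ``Theorem~7 of~\cite{Kro22}'' and invoked as a black box in \cref{app:QDCD}. There is therefore no paper proof to compare against. Your sketch is a reasonable outline of Krovi's own argument (the BCOW linear-system encoding with the sharper $\norm{L^{-1}}=\Or(m\,C(A))$ bound replacing the earlier eigenvalue-based estimates), so in that sense you have reconstructed the cited result rather than diverged from it.
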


Since $C(A) = O(\|P\|)$, $\|A\| = \|H\|+\|P\|$, and $g = 1$, there is a quantum algorithm for solving \eq{problem-kis} with 
\begin{equation}
\wt O\Bigl(\|P\|(\|H\|+\|P\|)T\Gamma\cdot\polylog(1/\epsilon)\Bigr)
\end{equation}
queries to the oracles for $A$ and $b$, and gate complexity is greater by polynomial factors.

\section{Simple examples of low-rank observables}
\label{app:example}

To compare the cost of different approaches, we consider two types of low-rank observables: \\
(a) time-increasing low-rank observables, e.g. $O(t)=\ket{\psi(t)}\bra{\psi(t)}$; \\
(b) time-oscillatory low-rank observables, e.g. $O(t)=\cos t \cdot \ket{\psi(t)}\bra{\psi(t)}$.

(a) We compute the cost functional $J$ given the observable $O(t)=\ket{\psi(t)}\bra{\psi(t)}$, 
\begin{equation}
J = J(T) = \int_0^T \bra{\psi(t)}O(t)\ket{\psi(t)}  ~\d t = T.
\end{equation}
This demonstrates a time-increasing function $J(T)$ in terms of $T$.

In more general, we consider the wave-function follower $O(t)=\ket{\phi(t)}\bra{\phi(t)}$~\cite{serban2005optimal}. It is used to force the system to follow a predefined wave-function $\phi(t)$. We assume that $\bra{\phi(t)}\ket{\psi(t)} \ge  \gamma > 0$ for all $t$, i.e. $\ket{\psi(t)}$ has a large overlap with $\ket{\phi(t)}$, then
\begin{equation}
J = J(T) = \int_0^T \bra{\psi(t)}O(t)\ket{\psi(t)}  ~\d t \ge \beta T.
\end{equation}
Here $J(T)$ increases linearly with $T$ as well. 

We compute the parameter $\Gamma$ as defined in \eq{Gamma}
\begin{equation}
\Gamma = \frac{|J(T)|^2+1}{|J(T)|} = \Theta(T).
\end{equation}

(b) We compute the cost functional $J$ given the observable $O(t)=\cos t \cdot \ket{\psi(t)}\bra{\psi(t)}$, 
\begin{equation}
J = J(T) = \int_0^T \bra{\psi(t)}O(t)\ket{\psi(t)}  ~\d t = \sin T.
\end{equation}
It indicates that $J(T)$ oscillates with time and $J(T) = O(1)$ for all $T>0$.

We compute the parameter $\Gamma$ as defined in \eq{Gamma}
\begin{equation}
\Gamma = \frac{|J(T)|^2+1}{|J(T)|} = \Theta(1).
\end{equation}

Overall, we examine the query complexities of our algorithms for the two types of observables, as summarized in \tab{observables}. For (a) time-increasing observables, both the quantum linear ODE solver and the quantum Carleman linearization approaches with amplitude estimation can achieve the best scaling $\Or(T^2/\epsilon)$; For (b) time-oscillatory observables, the quantum Carleman linearization approach with amplitude estimation is superior to other approaches with complexity $\Or(T/\epsilon)$.

\begin{table}[H]
\renewcommand{\arraystretch}{1.5}
    \centering
    \begin{adjustbox}{width=\textwidth}
    \begin{tabular}{c|c|c|c|c}
      \hline\hline
      \textbf{Theorem} & \textbf{Algorithm} & \textbf{Measurement} & \textbf{$O(t)=\ket{\psi(t)}\bra{\psi(t)}$} & \textbf{$O(t)=\cos t \cdot \ket{\psi(t)}\bra{\psi(t)}$} \\
      \hline
      \thm{hs-ht} & Hamiltonian simulation & Hadamard test & $\Or(T^3/\epsilon^2)$ & $\Or(T^3/\epsilon^2)$ \\
      \hline
      \multirow{2}*{\thm{hs-ae}} & \multirow{2}*{Hamiltonian simulation} & Biased amplitude estimation & $\Or(T^3/\epsilon)$ & $\Or(T^3/\epsilon)$ \\
      \cline{3-5}
      &  & Unbiased amplitude estimation & $\Or(T^{2.5}/\epsilon)$ & $\Or(T^{2.5}/\epsilon)$ \\
      \hline
      \thm{lode-ae} & Quantum linear ODE solver & Amplitude estimation & $\Or(T^2/\epsilon)$ & $\Or(T^2/\epsilon)$ \\
      \hline
      \thm{node-ae} & Quantum Carleman linearization & Padding, amplitude estimation & $\Or(T^2/\epsilon)$ & $\Or(T/\epsilon)$ \\
      \hline\hline
    \end{tabular}
    \end{adjustbox}
    \caption{Complexities of quantum algorithms for the time-increasing and  time-oscillatory observables. Here $T$ is the evolution time, and $\epsilon$ is the error tolerance. }
    \label{tab:observables}
\end{table}

\end{document}